\documentclass[conference, 10pt, twocolumn]{ieeeconf}

\IEEEoverridecommandlockouts
\overrideIEEEmargins

\usepackage[usenames]{color}
\usepackage{enumerate}
\usepackage{url}
\usepackage{subfigure}
\usepackage{amsfonts,mathrsfs}
\usepackage{verbatim}
\usepackage{acronym}
\usepackage{pifont}
\usepackage{graphicx}
\usepackage{amsthm}
\usepackage{ifmtarg}
\usepackage{algorithm}
\usepackage{algpseudocode}
\usepackage{mathtools}
\usepackage{cite}
\usepackage{lipsum}
\usepackage{stfloats}



\def\fskip#1{}

\newtheorem{theorem}{Theorem}

\newtheorem{assumption}{Assumption}

\newtheorem{definition}{Definition}

\newtheorem{lemma}{Lemma}

\newtheorem{remark}{Remark}

\def\1{{\bf 1}}

\newcommand{\remove}[1]{}

\newcommand*{\TitleFont}{%
      \usefont{\encodingdefault}{\rmdefault}{}{n}%
      \fontsize{19}{18}%
      \selectfont}

\def\argmin{\mathop{\rm argmin}}

\makeatletter
\algrenewcommand\ALG@beginalgorithmic{\footnotesize}
\makeatother

\begin{document}
\title{{\TitleFont \vspace{-1cm}Stochastic Games for Smart Grid Energy Management with Prospect Prosumers}}\vspace{-1cm}
\author{\authorblockN{S. Rasoul Etesami,\ Walid Saad,\ Narayan Mandayam,\ and H. Vincent Poor\vspace{-1cm}}\vspace{-2cm}\\
\thanks{S. Rasoul Etesami and H. Vincent Poor are with Department of Electrical Engineering, Princeton University, email: (setesami,poor)@princeton.edu.}
\thanks{Walid Saad is with Wireless@VT, Department of Electrical and Computer Engineering, Virginia Tech, Blacksburg, VA USA (email: walids@vt.edu).}
\thanks{Narayan Mandayam is with WINLAB, Department of ECE, Rutgers University,
North Brunswick, NJ 08902, (email: narayan@winlab.rutgers.edu).}
\thanks{This research was supported by the NSF under Grants  ECCS-1549881, ECCS-1549900, CNS-1446621, and ECCS-1549894.}}
\maketitle
\thispagestyle{empty}
\pagestyle{empty}  

\vspace{-2cm}
\begin{abstract}
In this paper, the problem of smart grid energy management under stochastic dynamics is investigated. In the considered model, at the demand side, it is assumed that customers can act as prosumers who own renewable energy sources and can both produce and consume energy. Due to the coupling between the prosumers' decisions and the stochastic nature of renewable energy, the interaction among prosumers is formulated as a stochastic game, in which each prosumer seeks to maximize its payoff, in terms of revenues, by controlling its energy consumption and demand. In particular, the subjective behavior of prosumers is explicitly reflected into their payoff functions using prospect theory, a powerful framework that allows modeling real-life human choices, rather than objective, user-agnostic decisions, as normative models do. For this prospect-based stochastic game, it is shown that there always exists a stationary Nash equilibrium where the prosumers' trading policies in the equilibrium are independent of the time and their histories of the play. Moreover, to obtain one of such equilibrium policies, a novel distributed algorithm with no information sharing among prosumers is proposed and shown to converge to an $\epsilon$-Nash equilibrium in which each prosumer is able to achieve its optimal payoff in an equilibrium up to a small additive error $\epsilon$. On the other hand, at the supply side, the interaction between the utility company and the prosumers is formulated as an online optimization problem in which the utility company's goal is to learn its optimal energy allocation rules. For this case, it is shown that such an optimization problem admits a no-regret algorithm meaning that regardless of the actual outcome of the game among the prosumers, the utility company can follow a strategy that mitigates its allocation costs as if it knew the entire demand market a priori. Simulation results show the convergence of the proposed algorithms to their predicted outcomes and present new insights resulting from prospect theory that contribute toward more efficient energy management in the smart grids. 
\end{abstract} 

\section{Introduction}

The electric power grid is evolving into a heterogeneous smart grid econsystem that will seamlessly integrate renewable resources, storage units, electric vehicles, smart meters, and other intelligent appliances \cite{lakshminarayana2014cooperation,atzeni2013demand,maharjan2013dependable}. One key feature of the smart grid is the use of storage devices for energy management among grid components. As shown recently in \cite{urgaonkar2011optimal,vytelingum2010agent}, incorporating storage devices into the grid design can significantly improve energy management and result in huge cost saving in electricity delivery. In fact, the role of storage devices is even more pronounced when a portion of injected electricity to the grid is obtained from renewables (e.g., wind power or solar energy). This is due to the many uncertainties, such as weather conditions, that can impact the amount of generated energy from such resources. In such situations, having access to storage units to save the current excess energy and use it whenever there is energy shortage in the grid will bring a lot of flexibility into the energy management. Therefore, managing uncertainties using storage devices and properly controlling the production and distribution of electricity are some of the most important challenges in the design and analysis of smart grids.

\vspace{-0.2cm}
\subsection{Related Work} 
 
There has been significant recent works that investigated the challenges of grid energy management in the presence of storage units and renewable energy \cite{wang2015load,wang2012dynamic,bayram2014survey,liu2014pricing,huang2002design,atzeni2012day}. In \cite{wang2015load} the authors studied the problem of energy management using load shifting thus allowing a part of the peak hour load to be moved to
an off-peak hour. However, their formulation was based on a static noncooperative game which cannot capture the stochastic nature of various grid components, such as renewable energy. Moreover, the work in \cite{wang2015load} relies on the notion of weighting from prospect theory, however, in practice both weighting and framing effects can impact consumer behavior (as will be shown in this work). In \cite{wang2012dynamic} an optimization framework was adopted to study the price fluctuations in electricity market. However, in this work, the agents were assumed to be either producers or consumers of energy. The works in \cite{huang2002design,bayram2014survey}, and \cite{saad2011noncooperative} studied grid energy management using an incentive compatible double-auction mechanism where the grid users bid for energy and utility companies set the price. However, these works focus on mechanism design problems and do not account for the behavior of consumers. A prediction-based pricing mechanism for data centers was proposed in \cite{liu2014pricing} while taking into account renewable energy sources. However, the work in \cite{liu2014pricing} relies on centralized optimization problem rather than on a distributed game-theoretic framework. Moreover, the work in \cite{atzeni2012day} proposed a day-ahead bidding strategy which allows the supply-side to know in advance an estimate of the amount of energy to be provided to the demand-side during the upcoming day. However, the formulation in \cite{atzeni2012day} is based on a static noncooperative game and it does \emph{not} take into account the subjective behavior of the users.

The effects of integrating storage units on the resilience of smart grids has been investigated in \cite{rahi2016prospect}. Moreover, controlling the uncertainties of renewable resources using stochastic optimization methods was studied in \cite{lakshminarayana2014cooperation} and \cite{chen2013energy}. Since the solution of such optimization problems is challenging, there has been an increased interest in the use of other methods such as online learning or regret minimization with the aim of handling uncertainties as done in \cite{kim2014real} and \cite{rose2011learning}. In addition, there has been a number of works that appeared on the use of game theory for analyzing the interactions between consumers and power companies \cite{saad2012game,maharjan2013dependable}, and \cite{wang2014game} mainly based on Stackelberg or static game models. More recently, it has been observed that in practical grids the subjective view of decision makers about their opponents can play an important role in changing the final outcomes \cite{saad2016toward}. This new perspective is often inspired from prospect theory (PT) \cite{kahneman1979prospect}, a mathematical framework which explains real-life decision-making and its deviations from conventional game theory. Indeed, as shown in \cite{wang2015load,xiao2015prospect}, and \cite{rahi2016prospect}, explicitly accounting for the subjective behavior of prosumers using PT can substantially change the anticipated energy management outcomes resulting from the participation of prosumers in energy trading. However, most of these works focus on the \emph{weighting} effects of PT, and ignore the \emph{framing} effects.

Despite being interesting, these existing works have focused on static game formulations which are not quite descriptive under practical smart grid settings in which many factors, such as renewable energy, are highly dynamic and stochastic. This motivates the use of a richer class of games, namely \textit{stochastic games}, in order to capture such dynamic uncertain environments. Stochastic games have been extensively studied in the literature \cite{fink1964equilibrium,altman2008constrained,basar1978decentralized}, with solutions crucially based on reinforcement learning \cite{bowling2000analysis}; however, beyond a handful of works \cite{weistochastic} that address the security of power grids against cyber-physical attacks, most of the existing applications do not address the problem of smart grid energy management. In particular, in most of the literature on stochastic games \cite{fink1964equilibrium,weistochastic}, a typical assumption is that each of the players has full information about the entire underlying game, which is a questionable assumption in many practical settings. For instance, in many situations, the prosumers do not have access to each others' private information such as the amount of stored or consumed energy. 

\vspace{-0.2cm}
\subsection{Contributions}

Our work in this paper provides one of the first formulations for smart grid energy management using stochastic games which works under incomplete information settings. Furthermore, since many grid components are owned and operated by humans, their subjective perceptions and decisions can substantially affect the grid outcomes. This makes PT a natural choice for smart grid design and analysis under real behavioral considerations on the behavior of smart grid prosumers. In particular, the subjective behavior of prosumers is even more pronounced when their uncertainty about the grid (as a result of stochastic renewables) or their opponents' decisions increases. Therefore, studying real-life decision-makings in smart grids under highly dynamic settings using a stochastic game framework has not been addressed before.

The main contribution of this paper is to develop a novel framework for smart grid energy management which takes into account the stochastic nature of renewable energy, the distributed nature of the system, and the subjective perceptions of the prosumers. Our work differs from most existing literature \cite{wang2015load,wang2012dynamic,bayram2014survey,liu2014pricing,huang2002design,atzeni2012day} in several aspects: 1) it models interactions between selfish prosumers using a stochastic game with incomplete information, 2) it incorporates real-life decision behavior of prosumers under the stochastic game framework (rather than a static game such as in \cite{wang2015load}) by using the framing and weighting effects of PT and studies its deviations from conventional expected utility theory (EUT), 3) it provides a novel distributed algorithm to obtain equilibrium points of the system with incomplete information (unlike conventional methods such as reinforcement learning \cite{la2016cumulative,lin2013stochastic}), and 4) it provides a computationally tractable and robust formulation for the utility company using online convex optimization.

In the studied model, at the demand side, we consider a set of prosumers which can both produce and consume energy. At each time instant, each prosumer decides on the amount of energy to consume and the amount of extra energy demand to buy from the utility company under a given pricing rule. The energy demand of each prosumer can depend on the energy storage level as well as the generated renewable energy of all other prosumers. In particular, each prosumer makes a decision based on its own subjective view on the energy market outcomes in order to maximize its payoff which reflects its loss or gain, in terms of revenues, in the grid. We show that such a game admits a stationary Nash equilibrium (NE) and we propose a distributed algorithm on how to drive the entire market to an $\epsilon$-NE, i.e., to a point which nearly maximizes all the prosumers' payoffs. Given these prosumer decisions, and in order to satisfy their energy demands, we further consider a utility company which responds them by allocating energy to different substations, and whose goal is to predict prosumers' demands in order to satisfy their energy needs more reliably while minimizing its own energy distribution costs. Simulation results illustrate the convergence of the proposed algorithm and show that the optimal decisions of prosumers in the case of PT are considerably different from those resulting from a conventional game. As an example, prosumers who have higher subjective evaluations tend to consume even more energy in the lower range of their energy storage, and they become more conservative as their energy storage levels increase. Such unusual consumption patterns will potentially require new benchmarks on how to distribute and manage energy more efficiently across power grids in which prosumers are actively participating in energy management.


The rest of the paper is organized as follows. In Section \ref{sec:model}, we introduce our system model formally. In Section \ref{sec:NE}, we prove the existence of an NE in the stochastic game among prosumers. We provide a learning algorithm for finding an $\epsilon$-NE policy among prosumers in Section \ref{sec:learning-prosumers}. In Section \ref{sec:online-convex}, we describe a no-regret algorithm for the utility company. Simulation results are given in Section \ref{sec:simulation}. We conclude the paper in Section \ref{sec:conclusion}.
 
{\bf Notation}: For a real number $x$, we let $[x]^+=\max\{0,x\}$.  For a vector $\boldsymbol{v}$, we let $v_i$ be its $i$th component and $\boldsymbol{v}'$ be its transpose. We denote all but the $i$th component of a vector $\boldsymbol{v}$ by $\boldsymbol{v}_{-i}$. We let $\mathbb{I}_{\mathcal{A}}$ be the indicator function of a set $\mathcal{A}$. Finally, we let $\Pi_{\mathcal{A}}[\cdot]$ be the projection operator on a closed convex set $\mathcal{A}$, i.e., $\Pi_{\mathcal{A}}[x]=\argmin_{y\in\mathcal{A}}\|x-y\|$.




\section{System Model and Problem Formulation}\label{sec:model}

In this section we provide our problem formulation for energy management in smart grid. On the supply side, there is a utility company which interacts with a set of prosumers by selling electricity. On the demand side, there are many prosumers who interact with each other as well as with the utility company through a non-cooperative stochastic game. In Subsection \ref{sec:stochastic-game}, we first describe the interaction game among prosumers, and postpone the formulation for the utility company until Subsection \ref{sec:utility-company}.
 
\subsection{Stochastic Game Among Prosumers}\label{sec:stochastic-game}
We consider a set $\mathcal{N}:=\{1,2,\ldots,N\}$ of $N$ prosumers, each of which can both produce and consume energy. Each prosumer $i\in \mathcal{N}$ has a storage unit of maximum capacity $S^{\max}_{i}$, whose energy level at time step $t=1,2,\ldots$ is modeled by a random variable $S_i(t)$ (note that $\{S_i(t)\}_{t=1}^{\infty}$ is a random process). We assume that each storage is subject to self-discharge\footnote{Self-discharge is due to internal chemical reactions, just as closed-circuit discharge is, and tends to occur more quickly at higher temperatures \cite{wu2000self}.} which is a common phenomenon in batteries and reduces the stored charge without any connection. Moreover, each prosumer is equipped with a renewable energy generator, such as a wind turbine or a solar panel. We denote the effective generated energy, or simply \textit{generated} energy, which is the energy harvested from renewable resources minus the storage self-discharge of prosumer $i$ at time $t$ by random variable $G_i(t)$. Since the amount of generated energy depends on many random factors such as climate conditions, therefore, in general $G_i(t)$ is a random variable which admits both negative or positive values (negative when the battery self-discharge is more than the harvested energy, and positive, otherwise). 

We denote the amount of energy consumed by prosumer $i$ between times $t$ and $t+1$ by $L_i(t)$. Moreover, we assume that each prosumer can compensate its shortage of energy by buying additional energy (if needed) from the utility company. We denote the extra energy demanded by prosumer $i$ from the utility company at time $t$ by $D_i(t)$. Here, consumption refers to any type of energy usage by a prosumer such as energy used for lighting or heating, while demand is the amount of additional energy requested by a prosumer from the utility company in order to satisfy its needs. Consequently, the storage level of prosumer $i$ in the next time step, $S_i(t+1)$, will be:
\begin{align}\label{eq:state-transition}
S_i(t+1)\!=\!\min\Big\{[S_i(t)\!+\!G_i(t)\!+\!D_i(t)\!-\!L_i(t)]^+, S^{\max}_{i}\Big\}. 
\end{align}

Due to the fact that, in reality, the amount of traded energy, stored energy, or price are measured based on discrete quantities (e.g. 1kWh, \$1) even though the nature of these parameters is continuous, we let the range of generated energy, consumed energy, demanded energy, and stored energy of prosumer $i$ be discrete sets $\mathcal{G}_i:=\mathbb{Z}$, $\mathcal{L}_i:=\{0,1,\ldots,L_i^{\max}\}$, $\mathcal{D}_i:=\{0,1,\ldots,D_i^{\max}\}$, and $\mathcal{S}_i:=\{0,1,\ldots,S_i^{\max}\}$, respectively.  In fact, the storage level of each prosumer can be considered as its \textit{state} at time $t$ which evolves according to a stochastic process governing the randomness of generated energy from renewable resources. Thus, depending on how much energy is left in the storage until from time $t$, $S_i(t)$, and the amount of generated energy at that time $G_i(t)$, prosumer $i$ must take an \textit{action} $A_i(t):=(L_i(t),D_i(t))$, so as to determine how much energy to consume and how many additional energy units to demand from the utility company, in which case its \textit{actual instantaneous} payoff will be
\begin{align}\label{eq:prosumer-payoff}
U_i(A_i(t),\boldsymbol{A}_{-i}(t))=f_i(A_i(t))-D_i(t)\times p_i(\boldsymbol{D}(t)),
\end{align}where $f_i(\cdot)$ is an increasing function representing the satisfaction of prosumer $i$ from consuming $L_i(t)$ units of energy, and $p_i(\cdot)$ is the energy price function based on all prosumers' demands which is charged to prosumer $i$ for buying $D_i(t)$ units of energy given other prosumers' demand $\boldsymbol{D}_{-i}(t)$. Note that $D_i(t)$ can be a function of $L_i(t)$ and $S_i(t)$, where by \eqref{eq:state-transition} the latter itself is a function of the random generated energy $G_i(t)$. Thus, the payoff of each player is implicitly a function of generated energy and storage level of all prosumers. 
\begin{remark}
Although $D_i(t)$ and $L_i(t)$ are assumed to belong to independent sets $\mathcal{D}_i$ and $\mathcal{L}_i$, however, they are constrained by the state dynamics \eqref{eq:state-transition}. Nevertheless, our analysis remain unchanged even if $A_i(t)$ belongs to a more general constraint set, e.g., $A_i(t)\in \{(L_i,D_i)\in \mathcal{L}_i\times \mathcal{D}_i: L_i+D_i\leq \Delta\}$.
\end{remark}

We now provide a formal definition on how the prosumers can select their actions at different time instants \cite{altman1999constrained}: 

\begin{definition}\label{def:policy}
\normalfont A \emph{policy} $\Psi_i$ for prosumer $i$ is a sequence of probability measures $\Psi_i:=\{\Psi_i(t),t=1,2,\ldots\}$ over the action set $\mathcal{A}_i$ such that at each time $t$ chooses an action from $\mathcal{A}_i:=\mathcal{L}_i\times\mathcal{D}_i$ according to the probability measure $\Psi_i(t)$ whose distribution in general can be a function of past histories of states, actions, or even time. A policy $\Psi_i$ is called a \emph{stationary policy} if the probability of choosing an action $a_i\in \mathcal{A}_i$ only depends on the current state $s_i$ and is independent of the time $t$. In the case of stationary policy, we denote this time independent probability by $\Psi_i(a_i|s_i)$.  
\end{definition}
Next, assume that each prosumer $i\in \mathcal{N}$ chooses its action at different time instants based on some policy $\Psi_i$. Here, we note that due to the discrete nature of variables, the instantaneous payoff received by prosumer $i$ is a discrete random variable whose randomness comes from two different sources: i) The random generated energy from renewable resources. Indeed, as seen from \eqref{eq:state-transition}, the state of prosumer $i$ evolves as a function of random process $\{G_i(t), t=1,2,\ldots\}$ and ii) The internal randomness of joint policies $\boldsymbol{\Psi}=(\Psi_1,\ldots,\Psi_n)$. This is due to the fact that the prosumers choose their actions at different stages based on some probability distribution which depends on the joint policies. As a result, although each prosumer has knowledge about its own policy and random generated energy, however, it is quite uncertain about the randomness caused by the \emph{other} prosumers' decisions and generated energy. 

In this regard, there is strong evidence \cite{kahneman1979prospect} that, in the real-world, human decision makers do not make decisions based on expected values of outcomes evaluated by actual probabilities, but rather based on their perception on the potential value of losses and gains associated with an outcome. Indeed, using prospect theory (PT), the authors in \cite{kahneman1979prospect} showed that most people will often overestimate low probability outcomes and underestimate high probability outcomes. This phenomenon, known as \textit{weighting} effect in PT, reflects the fact that humans usually have subjective views on uncertain outcomes. For instance, for a given policy $\Psi_i$, prosumer $i$'s payoff  depends with some probability on others' policies $\boldsymbol{\Psi}_{-i}$ and renewable resources $\boldsymbol{G}_{-i}$. However, due to the uncertainty that prosumer $i$ has about its opponents' policies and renewables, instead of using actual probabilities induced by $\boldsymbol{\Psi}_{-i}$ and $\boldsymbol{G}_{-i}$ to evaluate its own expected payoff and then take its action accordingly, prosumer $i$ may use a weighted version of those using a nonlinear function $w_i(\cdot)$, taking into account its own subjective view about its opponents.\footnote{For example, prosumers often perceive losses more than gains and intend to overweight their losses and underweight their gains.} Moreover, each prosumer may have different perception about its loss or gain. This phenomenon that in reality human sort their losses or gains with respect to a reference point using their own, individual and subjective value function is known as \textit{framing} effect in PT. This differs from conventional expected utility theory (EUT),\footnote{In EUT the decisions are made purely based on conventional expectation.} which assumes players are rational agents that are indifferent to the reference point with respect to which their losses or gains are evaluated. Hence, at each time instant $t$, each prosumer receives a payoff which is the realization of a random
variable whose expectation equals its expected prospect payoff and, then, this prosumer makes its next decision.


To capture such human decision behavior, we use the following definition from PT \cite{kahneman1979prospect}:
\begin{definition}\label{def:prospect}
\normalfont Any prosumer $i$ has two corresponding functions $w_i(\cdot):[0,1]\to \mathbb{R}$ and $v_i(\cdot):\mathbb{R}\to \mathbb{R}$, known as \emph{weighting} and \emph{valuation} functions. The \emph{expected prospect} of a random variable $X$ with outcomes $x_1,x_2,\ldots,x_k$, and corresponding probabilities $p_1,p_2,\ldots,p_k$, for prosumer $i$ is given by $\mathbb{E}^{^{\rm PT}}[X]:=\sum_{\ell=1}^{k}w_i(p_{\ell})v_i(x_{\ell})$.
\end{definition}


Although there are many weighting and valuation functions, however, two of the widely used functions in the literature are known as Prelec weighting function and Tversky valuation function defined by \cite{prelec1998probability,al2008note},
\begin{align}\label{eq:weight-value}
&w(p)=\exp(-(-\ln p)^{c}),\cr 
&v(x)=\begin{cases} x^{c_1} & \mbox{if} \ \ x\ge 0, \\ 
-c_2(-x)^{c_3} & \mbox{if} \ \ x<0,\end{cases}
\end{align}
where $0<c\leq 1$ is a constant denoting the distortion between subjective and objective probability, and $c_1,c_2,c_3>0$ are constants denoting the degree of loss aversion. In fact, the functions in \eqref{eq:weight-value} are suggested based on extensive real-world experiments. However, our approach can accommodate any type of such functions without making use of the specific forms given in \eqref{eq:weight-value} as long as the weighting function $w_i(\cdot)$ is a continuous function and $w_i(\cdot)$ and $v_i(\cdot)$ satisfy axioms of valid weighting and valuation functions in PT \cite{kahneman1979prospect} (we only use the specific functions \eqref{eq:weight-value} in Section \ref{sec:simulation} to provide more concrete simulation results).  

Therefore, based on Definition \ref{def:prospect}, we assume that each prosumer $i$ makes it decision at time $t$ based on the realization of a random variable whose expectation equals its expected \emph{prospect} payoff $\mathbb{E}^{^{\rm PT}}_{\boldsymbol{\Psi},\boldsymbol{G}}[U_i(\boldsymbol{A}(t))]$, rather than its actual expected payoff. Here, the expectation is with respect to the internal randomness induced by joint policy of all prosumers, $\boldsymbol{\Psi}$, as well as their random generated energy processes $\boldsymbol{G}$. As the prosumers are individually maximizing their own payoffs and since their actions are coupled, the use of a game-theoretic solution \cite{basar1999dynamic} is apropos. Hence, we can formally formulate a stochastic game among prosumers as follows:


\begin{itemize}
\item A set of prosumers (players) $\mathcal{N}$. Each prosumer $i\in \mathcal{N}$ has an action set $\mathcal{A}_i$, and a state set $\mathcal{S}_i$.
\item Denoting the entire state of the game at time step $t$ by $\boldsymbol{S}(t)\in \mathcal{S}_1\!\times\!\ldots\!\times\!\mathcal{S}_N$, each prosumer $i$ takes an action $A_i(t)\in \mathcal{A}_i$ with some probability based on its own policy $\Psi_i$ and receives a payoff which is the realization of a random variable whose expectation equals its expected prospect payoff $\mathbb{E}^{^{\rm PT}}_{\boldsymbol{\Psi},\boldsymbol{G}}[U_i(\boldsymbol{A}(t)]$. 
\item Depending on what states and actions are realized at time $t$, the state of the game will move to a subsequent random state $\boldsymbol{S}(t+1)$ whose distribution depends on the probabilities that the players choose their actions and those of being in different states at time $t$. 
\item Since the players choose their actions at different stages of the game based on some policy profile $\boldsymbol{\Psi}:=(\Psi_i,\boldsymbol{\Psi}_{-i})$, the average expected payoff received by prosumer $i$ is given by    
\begin{align}\label{eq:discounted-payoff}
V_i(\Psi_i,\boldsymbol{\Psi}_{-i}):&=\lim_{T\to \infty}\frac{1}{T} \sum_{t=1}^{T}\mathbb{E}^{^{\rm PT}}_{\boldsymbol{\Psi},\boldsymbol{G}}\Big[U_i\big(\boldsymbol{A}(t)\big)\Big],
\end{align}
where the expectation is with respect to the randomness induced by the joint policy $\boldsymbol{\Psi}$, and state of the game. \footnote{If the limit does not exist, we can replace $\lim$ with $\limsup$. However, as we will see, for stationary policies this limit always exists.} 
\end{itemize}

In this game, each prosumer in the grid seeks to select a policy which maximizes its own average perceived revenue given by \eqref{eq:discounted-payoff}.  

\subsection{Optimization Problem for the Utility Company}\label{sec:utility-company}

Next, we incorporate the role of utility company into our problem setting. Using \eqref{eq:prosumer-payoff}, one can see that the utility company interacts with prosumers in two different ways: i) Setting the electricity prices and ii) Producing enough energy and distributing it among prosumers to satisfy their demands. Although one can consider the utility company itself as an additional player interacting with prosumers, however, due to the fact that, independent of the prosumers' actual demands, the utility company is responsible to satisfy their needs by generating and distributing enough energy among them, we formulate the utility company's problem as a separate optimization problem. Since generating and distributing energy can be very costly for the utility company, hence for a fixed pricing rule $\boldsymbol{p}(\cdot):=(p_1(\cdot),\ldots,p_n(\cdot))$, the most important question is to devise the amount of energy to produce and the way to distribute it in the grid. In other words, the utility company's goal is to learn the demand functions of the prosumers, in which case it can generate and allocate accurate energy units to each prosumer, and hence, minimize its allocation cost. 

More precisely, consider a utility company which can allocate energy to $K\in \mathbb{N}$ different substations. Each substation can distribute the received energy from the utility company to a disjoint subset of prosumers in the grid. We distinguish each substation by the set of prosumers that it serves and denote them by $\mathcal{B}_1,\mathcal{B}_2,\ldots,\mathcal{B}_{K}$. In fact, substations are the main link between the prosumers and the utility company as illustrated in Figure \ref{fig:smart-grid}. At the beginning of each time $t=1,2,\ldots$, the utility company decides on the amount of energy to generate and how to distribute it among different substations, denoted by $e_1(t),\ldots,e_K(t)$. The action set of the utility company at any time $t$ will be:
\begin{align}\label{eq:action-set-utility-company}
\mathcal{E}\!=\!\Big\{{\rm \boldsymbol{e}}\!=\!\big(e_1,\ldots,e_K\big)\!: \ e_i\ge 0, \forall i, \ \sum_{i=1}^{K}e_i\!\leq\! E_{\max}\Big\},
\end{align}
where $E_{\max}$ is the maximum energy generation capability of the utility company. 

At time $t$, each prosumer demands extra energy $D_i(t)$ at the price $p_i(\boldsymbol{D}(t))$ from its closest substation. If such a trading results in a shortage of energy at some substation $\mathcal{B}_{\ell}$, the utility company incurs a cost proportional to the extra energy that it has to produce in order to fully satisfy the prosumers' needs in that substation. In other words, the utility company benefits from selling energy to the grid, but suffers a cost due to its prediction error in energy generating at time $t$ captured by $(\sum_{j\in \mathcal{B}_{\ell}}D_j(t)-e_{\ell}(t))^2$. Given an allocation profile ${\rm \boldsymbol{e}}(t)\in \mathcal{E}$ and some arbitrary but fixed pre-selected energy pricing function $\boldsymbol{p}(\cdot)$, we define the instantaneous cost of the utility company at time $t$ to be
\begin{align}\label{eq:company-cost}
C({\rm \boldsymbol{D}}(t),{\rm \boldsymbol{e}}(t)):&=\beta\sum_{\ell=1}^{K}e_{\ell}(t)-\sum_{i=1}^{N}D_i(t)p_i(\boldsymbol{D}(t))\cr 
&\qquad+\gamma\sum_{\ell=1}^{K}\Big(\sum_{j\in \mathcal{B}_{\ell}}D_j(t)-e_{\ell}(t)\Big)^{2}\!\!,
\end{align}
where $\beta$ and $\gamma$ are constant denoting the actual unit price of energy generation and regeneration (due to prediction error) for the utility company. The first term in \eqref{eq:company-cost} is the generation cost for the utility company based on its prediction, the second term is its income due to selling electricity to the market with determined price function $\boldsymbol{p}(\cdot)$, and the last term corresponds to the extra cost that the utility company incurs due to its false prediction in energy allocation to different substations. 

\begin{figure}[t!]
\vspace{-2cm}
\begin{center}
\hspace{0.5cm}\includegraphics[totalheight=.25\textheight,
width=.37\textwidth,viewport=50 0 750 700]{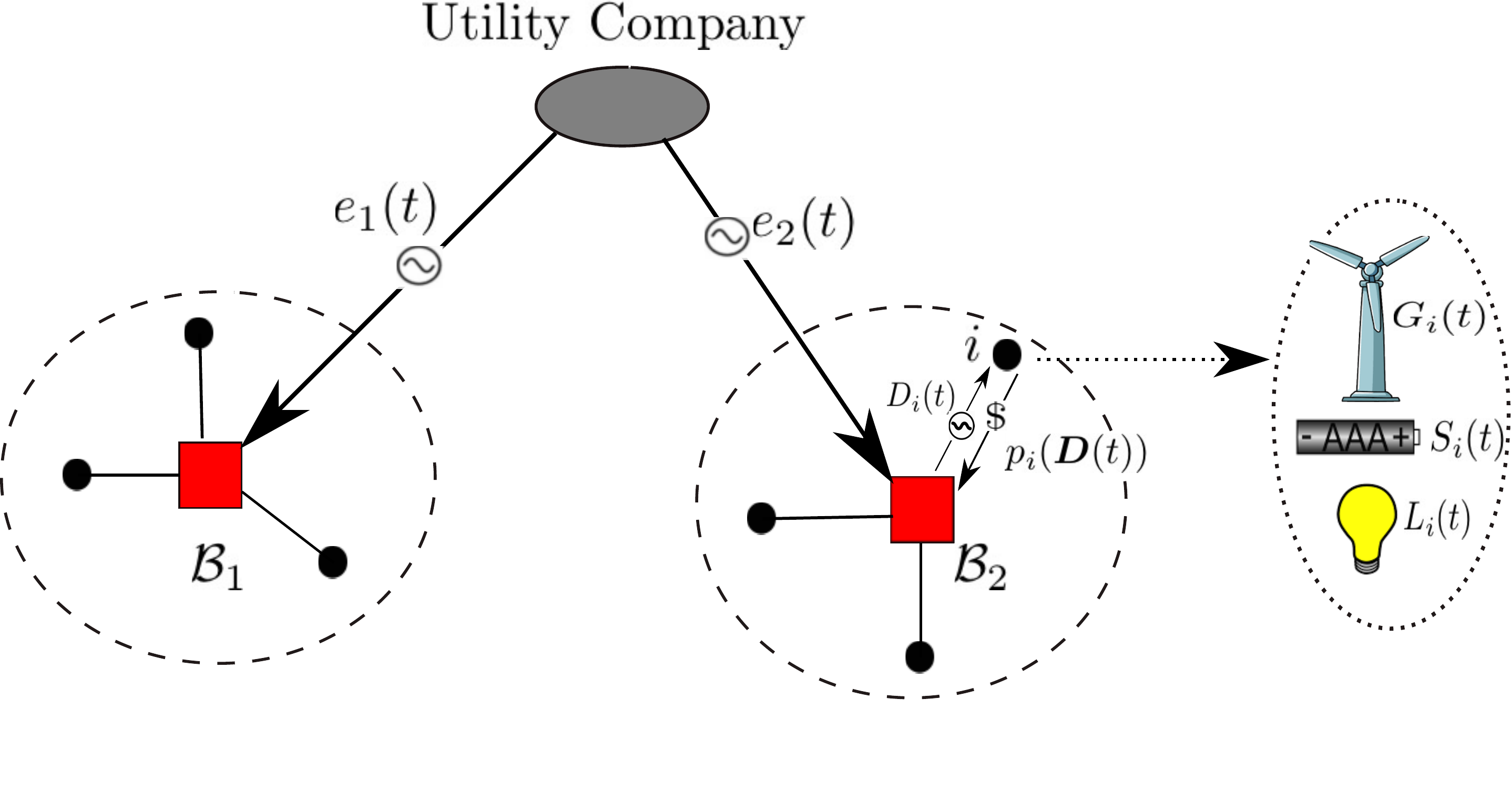} \hspace{0.4in}
\end{center}\vspace{-0.75cm}
\caption{Illustration of the smart grid with $2$ substations and $6$ prosumers.}\vspace{-0.3cm}
\label{fig:smart-grid}
\end{figure} 

As mentioned earlier, the utility company must generate enough energy and distribute it among different substations in order to satisfy prosumers' demands. If the utility company were aware of the sequence of energy demands by prosumers a priori, then it could easily solve an open loop optimization problem to find its fixed optimal allocation rule in order to minimize its overall cost. However, one important challenge here is that since the prosumers' demands heavily depend on the outcome of the stochastic game among them, it is not clear how much energy will be requested by different prosumers at different stages of the game. Therefore, one can define the regret of the utility company to be the difference between its minimum costs when the utility company is aware of the demands a priori and when it is not. This provides a reasonable measure on how well the utility company can predict the demand market, and hence, respond to it properly. Therefore, the utility company aims to learn the optimal allocation rules in order to minimize its overall regret given by
\begin{align}\nonumber
R_T:=\sum_{t=1}^{T}C({\rm \boldsymbol{D}}(t),{\rm \boldsymbol{e}}(t))-\max_{\boldsymbol{x}\in \mathcal{E}}\sum_{t=1}^{T}C({\rm \boldsymbol{D}}(t),\boldsymbol{x}).
\end{align}
which is a measure frequently used in the computer science literature for evaluating the performance of a learning strategy with respect to an uncertain environment. Here, the utility company's goal is to select a strategy that minimizes its regret. This means that it dynamically allocates energy units \textit{without} knowing the actual demands with a similar cost as if it \textit{knows} the entire demand market ahead of time and wants to find a fixed optimal allocation.


\section{Stationary NE Among Prosumers}\label{sec:NE}

Next, we analyze the interaction among the prosumers formulated as a stochastic game in Section \ref{sec:stochastic-game}. Often in smart grid there are many prosumers who are simultaneously maximizing their own payoffs. Therefore, a global optimal solution which maximizes all prosumers' payoffs may not exist. One suitable concept to solve this game, is that of a Nash equilibrium (NE) which, for the studied stochastic game can be formally defined as follows:
\begin{definition}
\normalfont A policy profile $\boldsymbol{\Psi}^*=(\Psi^*_1,\ldots,\Psi^*_N)$ is said to constitute a closed-loop \emph{Nash equilibrium} (NE) if for all $i\in \mathcal{N}$, and any policy $\Psi_i$ of a given player $i$ with payoff function $V_i(\cdot)$ given in \eqref{eq:discounted-payoff}, we have $V_i(\Psi^*_i,\boldsymbol{\Psi}_{-i}^*)\ge V_i(\Psi_i,\boldsymbol{\Psi}_{-i}^*)$.
\end{definition}
In other words, a policy profile constitutes an NE of the stochastic game among prosumers if no prosumer can unilaterally improve its payoff by changing its policy. Therefore, at a given NE, each prosumer can only look at its own energy storage level and then decide on how much energy to consume and to purchase from the utility company in order to maximize its own payoff. Next, we state the following assumption which essentially requires that the generated energy process of different prosumers be independent. 


\begin{assumption}\label{ass:stationary}
\normalfont  We assume that the generated energy processes of prosumers form i.i.d random processes such that $\lambda_i\!:\!=\!\min\limits_{|k|\leq S^{\max}_i}\mathbb{P}\{G_i\!=\!k\}\!>\!0, \forall i$, where $G_i$ denotes the generated energy distribution of prosumer $i$ such that $G_i(t)\sim G_i, \forall t$. Moreover, we assume that prosumers have limited computational capabilities such that they cannot estimate each others' policies by only looking at their own payoffs.
\end{assumption} 
To justify the above assumption, extensive statistical experiments in weather forecasting have shown that wind speed often matches the Weibull or Rayleigh distributions \cite{carta2009review}. Note that Assumption \ref{ass:stationary} allows different kinds of random generated energy $G_i$ for different prosumers as long as they have reasonably large support to ensure that $\lambda_i>0$ (which is the case for Weibull or Rayleigh distribution). Moreover, as discussed in \cite{hjorth2000snow}, it is generally assumed that the weather system with a time separation of every 3 to 4 days can be assumed to be independent. In particular, if the prosumers are located in relatively far distances from each other, then they are most likely subject to independent environmental conditions, i.e., independent generated energy. It is worth mentioning that the independencies of generated energy processes of different prosumers can be relaxed to the cases in which there is a correlation among them. However, in that case we need a stronger assumption which guarantees that for each prosumer $i$ and for any stationary policy $\Psi_i$ of that prosumer, the induced Markov chain of player $i$ over its states admits a unique stationary distribution \cite{altman2008constrained}.

We now state the following Lemma which says that any stationary policy followed by a prosumer induces a stationary distribution over its set of states, and that the convergence to such a stationary distribution is geometrically fast.       

\begin{lemma}\label{lemm:stationary}
\normalfont Under Assumption \ref{ass:stationary}, for any stationary policy $\Psi_i$ of prosumer $i$,  there exists a unique probability vector $\pi^{\Psi_i}$ over the states $\mathcal{S}_i$ such that 
\begin{align}\nonumber
\max_{s\in \mathcal{S}_i}\left|\mathbb{P}^{\Psi_i}\{S_i(t)=s\}-\pi^{\Psi_i}(s)\right|\leq (1-\lambda_i)^t,
\end{align}  
where $\mathbb{P}^{\Psi_i}\{\cdot\}$ denotes the probability measure induced over the state space of prosumer $i$ when following policy $\Psi_i$.  
\end{lemma}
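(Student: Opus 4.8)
The plan is to observe that, once the stationary policy $\Psi_i$ is fixed, the only randomness left in the storage recursion \eqref{eq:state-transition} is the i.i.d. generation $\{G_i(t)\}$ (Assumption \ref{ass:stationary}) together with the internal randomization $\Psi_i(a_i|s_i)$, and both depend on the \emph{current} state only. Hence $\{S_i(t)\}_{t\ge 1}$ is a time-homogeneous Markov chain on the finite state space $\mathcal{S}_i=\{0,1,\ldots,S^{\max}_i\}$ with transition matrix
\[
P(s,s')=\sum_{(L,D)\in \mathcal{A}_i}\Psi_i\big((L,D)\,\big|\,s\big)\,\mathbb{P}\big\{\min\{[s+G_i+D-L]^+,S^{\max}_i\}=s'\big\}.
\]
Proving the lemma then reduces to showing that $P$ is \emph{uniformly ergodic} with contraction factor $1-\lambda_i$: a stationary vector $\pi^{\Psi_i}$ exists, is unique, and the $t$-step law converges to it geometrically.

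The engine I would use is a Doeblin-type minorization around a fixed reference state, say the empty-battery state $s^\star=0$. The goal is the uniform one-step lower bound $P(s,0)\ge \lambda_i$ for every $s\in\mathcal{S}_i$, i.e. $P(s,\cdot)\ge \lambda_i\,\delta_{0}(\cdot)$ with $\delta_0$ the Dirac mass at $0$. Granting this, the Dobrushin-coefficient argument gives, for any two laws $\mu,\nu$ on $\mathcal{S}_i$, the contraction $\|\mu P-\nu P\|_{\mathrm{TV}}\le (1-\lambda_i)\|\mu-\nu\|_{\mathrm{TV}}$. Since $P$ is a finite stochastic matrix it has at least one fixed point $\pi^{\Psi_i}$, and the contraction forces uniqueness. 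Writing $\mu_t(\cdot)=\mathbb{P}^{\Psi_i}\{S_i(t)=\cdot\}$ and applying the contraction $t$ times to the trivial bound $\|\mu_0-\pi^{\Psi_i}\|_{\mathrm{TV}}\le 1$ yields $\|\mu_t-\pi^{\Psi_i}\|_{\mathrm{TV}}\le (1-\lambda_i)^{t}$. Because the total variation distance $\|\mu-\nu\|_{\mathrm{TV}}=\max_{A\subseteq\mathcal{S}_i}|\mu(A)-\nu(A)|$ dominates the discrepancy on every singleton, this immediately gives $\max_{s\in\mathcal{S}_i}|\mathbb{P}^{\Psi_i}\{S_i(t)=s\}-\pi^{\Psi_i}(s)|\le (1-\lambda_i)^t$, which is exactly the claimed estimate.

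The main obstacle I expect is establishing the uniform minorization $P(s,0)\ge\lambda_i$ across \emph{all} state-dependent randomized actions. From state $s$ under action $(L,D)$ the chain lands at $0$ iff $s+G_i+D-L\le 0$, i.e. $G_i\le L-D-s$, so $P(s,0)=\sum_{(L,D)}\Psi_i((L,D)|s)\,\mathbb{P}\{G_i\le L-D-s\}$. Assumption \ref{ass:stationary} supplies $\mathbb{P}\{G_i=k\}\ge\lambda_i$ for every $|k|\le S^{\max}_i$, so whenever the threshold satisfies $L-D-s\ge -S^{\max}_i$ one already gets $\mathbb{P}\{G_i\le L-D-s\}\ge\mathbb{P}\{G_i=-S^{\max}_i\}\ge\lambda_i$. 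The delicate point is that this threshold condition, equivalently $s+D-L\le S^{\max}_i$, must hold for every action on which $\Psi_i$ places mass, whereas a net-charging action (large demand $D$, little consumption $L$) pushes the net drift $G_i+D-L$ toward the \emph{upper} cap $S^{\max}_i$ rather than toward $0$. Resolving this is where the boundary structure must be exploited: either invoking the capping at $S^{\max}_i$ together with the admissibility constraints on $(L,D)$ so that no feasible action lets $s+D-L$ exceed $S^{\max}_i$, or, symmetrically, treating $S^{\max}_i$ as a second reference state hit with probability $\ge\lambda_i$ by the net-charging actions and combining the two boundary minorizations into a single uniform floor $\lambda_i$ on reaching a common reference set. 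Once that uniform one-step floor is secured, the geometric bound and the uniqueness of $\pi^{\Psi_i}$ follow mechanically from the contraction above.
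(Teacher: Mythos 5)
Your proposal is correct and runs on the same engine as the paper's proof: extract a one-step Doeblin-type lower bound from Assumption \ref{ass:stationary}, then invoke geometric ergodicity of a finite-state Markov chain. The difference is where the minorization lives. The paper proves the stronger fact that \emph{every} entry of the transition matrix is at least $\lambda_i$: for any states $s,s'$ and any action $a=(l,d)$, the event $\{G_i=s'-s-d+l\}$ forces the capped recursion \eqref{eq:state-transition} to land exactly at $s'$, and since $0\le s+d-l\le S^{\max}_i$ and $0\le s'\le S^{\max}_i$ give $|s'-s-d+l|\le S^{\max}_i$, Assumption \ref{ass:stationary} assigns this event mass at least $\lambda_i$; the rate $(1-\lambda_i)^t$ is then quoted from the fundamental theorem of Markov chains. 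You instead minorize only at the single reference state $0$ and run the explicit Doeblin/Dobrushin contraction in total variation, which is equally valid and in fact needs slightly less: only the cap side $s+D-L\le S^{\max}_i$ of the admissibility constraint, rather than both sides. Regarding the ``delicate point'' you flag (net-charging actions with $s+D-L>S^{\max}_i$): the paper dispatches it by exactly your first suggested route, asserting $0\le s+d-l\le S^{\max}_i$ as a physical admissibility property --- $s+d-l$ is the storage content at the moment the action is taken, hence confined to $[0,S^{\max}_i]$ (cf.\ the constraint set in Remark 1) --- so no feasible action escapes the minorization. Your fallback of a second reference state at $S^{\max}_i$ would be the one genuinely shaky step: two rows minorized at \emph{different} states need not overlap, so the one-step Dobrushin coefficient could be trivial and the contraction would not follow without extra work; the admissibility route is the right one.
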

\begin{proof}
A proof can be found in Appendix \ref{apx:lemma-stationary}.
\end{proof}

Next, in order to establish the existence of NE among prosumers, our first step is to characterize the best-response set of each prosumer with respect to others. For this purpose, we use a similar technique as in \cite{altman1999constrained} to characterize the optimal policy of each prosumer with respect to others using a linear program with occupation measures as its variables. Denote an arbitrary but fixed stationary policy for all prosumers other than $i$ by $\boldsymbol{\Psi}_{-i}$. Following any policy $\theta_i$ (not necessarily stationary) by prosumer $i$ induces a probability measure $\mathbb{P}^{\theta_i}\{\cdot\}$ on the trajectories of its joint state-actions $(S_i(t),A_i(t))$. On the other hand, since the dynamics of state-actions of each prosumer is totally determined by its own policy and its own generated energy process which by Assumption \ref{ass:stationary} is independent of others, the induced probabilities over the joint action-space of different prosumers are independent. Thus
\begin{align}\nonumber
&\mathbb{P}^{(\theta_i,\boldsymbol{\Psi}_{-i})}\{\boldsymbol{S}(t)={\rm \boldsymbol{s}}, \boldsymbol{A}(t)={\rm \boldsymbol{a}}\}\cr 
&\!=\!\mathbb{P}^{\theta_i}\{S_i(t)\!=\!s_i,A_i(t)\!=\!a_i\!\}\prod_{j\neq i}\!\mathbb{P}^{\Psi_j}\{S_j(t)\!=\!s_j, A_j(t)\!=\!a_j\!\}. 
\end{align}
Using \eqref{eq:discounted-payoff}, the payoff of the $i$th prosumer can be written as 
\begin{small}\begin{align}\label{eq:prospect-evaluation}
V_i(\theta_i,\boldsymbol{\Psi}_{-i})&=\lim_{T\to \infty}\frac{1}{T}\sum_{t=1}^{T}\sum_{({\rm \boldsymbol{s}},{\rm \boldsymbol{a}})}\mathbb{P}^{\theta_{i}}\{S_{i}(t)=s_{i}, A_{i}(t)=a_{i}\}\cr 
&\times w_i\Big(\prod_{j\neq i}\mathbb{P}^{\Psi_j}\{S_j(t)=s_j, A_j(t)=a_j\}\Big)v_i\big(U_i(\boldsymbol{a})\big),
\end{align}\end{small}where $w_i(\cdot)$ and $v_i(\cdot)$ are the weight and valuation functions of a prosumer $i$ as described in Definition \ref{def:prospect}. In \eqref{eq:prospect-evaluation} we have assumed that a prosumer has a subjective evaluation only of the other players' probabilities. This is because usually each prosumer $i$ is aware of its own prospect due to the fact that $\mathbb{P}^{\theta_i}\{S_i(t)=s_i,A_i(t)=a_i\}$ is fully determined based on its own policy which is known to prosumer $i$ (and hence it does not weight that probability), but it has uncertainty on others' policies (consequently their outcome probabilities). Then,
\begin{align}\label{eq:w-limit}
&\lim_{t\to \infty}w_i\Big(\prod_{j\neq i}\mathbb{P}^{\Psi_j}\{S_j(t)=s_j, A_j(t)=a_j\}\Big),\cr 
&=w_i\Big(\lim_{t\to \infty}\prod_{j\neq i}\mathbb{P}^{\Psi_j}\{A_j(t)=a_j|S_j(t)=s_j\}\mathbb{P}\{S_j(t)=s_j\}\Big),\cr 
&=w_i\Big(\lim_{t\to \infty}\prod_{j\neq i}\Psi_j(a_j|s_j)\mathbb{P}\{S_j(t)=s_j\}\Big),\cr 
&=w_i\Big(\prod_{j\neq i}\Psi_j(a_j|s_j)\pi^{\Psi_j}(s_j)\Big), 
\end{align}where the first equality follows from the continuity of $w_i(\cdot)$, the second equality holds because $\Psi_j$ is a stationary policy which does not depend on time, and the last equality is due to Lemma \ref{lemm:stationary}. Substituting \eqref{eq:w-limit} into \eqref{eq:prospect-evaluation} we get 

\vspace{-0.3cm}
\begin{small}
\begin{align}\label{eq:V_i}
&V_i(\theta_i,\boldsymbol{\Psi}_{-i})=\lim_{T\to \infty}\frac{1}{T}\sum_{t=1}^{T}\sum_{({\rm \boldsymbol{s}},{\rm \boldsymbol{a}})}\mathbb{P}^{\theta_{i}}\{S_{i}(t)=s_{i}, A_{i}(t)=a_{i}\}\cr 
&\qquad\qquad\qquad\times w_i\Big(\prod_{j\neq i}\Psi_j(a_j|s_j)\pi^{\Psi_j}(s_j)\Big)v_i\big(U_i(\boldsymbol{a})\big),\cr
&=\lim_{T\to \infty}\frac{1}{T}\sum_{t=1}^{T}\sum_{(s_i,a_i)}\Big[\mathbb{P}^{\theta_i}\{S_{i}(t)=s_{i}, A_{i}(t)=a_{i}\}\cr 
&\qquad\qquad\qquad\times\!\!\!\!\!\sum_{({\rm \boldsymbol{s}}_{-i},{\rm \boldsymbol{a}}_{-i})}\!\!\!\!\!w_i\big(\prod_{j\neq i}\Psi_j(a_j|s_j)\pi^{\Psi_j}(s_j)\big)v_i\big(U_i(\boldsymbol{a})\big)\Big],\cr 
&=\lim_{T\to \infty}\frac{1}{T}\sum_{t=1}^{T}\sum_{(s_i,a_i)}\mathbb{P}^{\theta_i}\{S_{i}(t)=s_{i}, A_{i}(t)=a_{i}\}\times K_i(s_i,a_i).
\end{align}\end{small}Here, $K_i(s_i,a_i)\!:=\!\!\!\!\!\!\sum\limits_{({\rm \boldsymbol{s}}_{-i},{\rm \boldsymbol{a}}_{-i})}\!\!\!\!\!\!w_i\big(\prod\limits_{j\neq i}\Psi_j(a_j|s_j)\pi^{\Psi_j}(s_j)\big)v_i\big(U_i(\boldsymbol{a})\big)$. As seen from \eqref{eq:V_i}, for fixed stationary policies $\boldsymbol{\Psi}_{-i}$ of other players, the optimal policy for player $i$ can be obtained by solving a Markov Decision Process (MDP) whose objective function $V_i(\theta_i,\boldsymbol{\Psi}_{-i})$ depends only on the marginal distribution of the $i$th player policy $\theta_i$ induced on its joint state-action pair $(S_i(t),A_i(t))$. As has been shown in \cite[Theorem 4.1]{altman1999constrained}, for any such MDP, the set of stationary policies is complete and dominant, meaning that without any loss of generality prosumer $i$ can obtain its optimal policy among stationary policies. Therefore, letting $\theta_i:=\Psi_i$ be the optimal stationary policy of prosumer $i$ with respect to $\boldsymbol{\Psi}_{-i}$, one can easily see that $\lim\limits_{T\to \infty}\frac{1}{T}\sum\limits_{t=1}^{T}\mathbb{P}^{\Psi_{i}}\{S_{i}(t)\!=\!s_{i}, A_{i}(t)\!=\!a_{i}\}$ exists, and equals $\Psi_i(a_i|s_i)\pi^{\Psi_i}(s_i)$. Now similar as to \cite{altman1999constrained}, define the occupation measures $\{\rho(s_i,a_i), \forall (s_i,a_i)\in \mathcal{S}_i\times \mathcal{A}_i \}$ to be 
\begin{align}\label{eq:occupation-def}
\rho(s_i,a_i):=\lim_{T\to \infty}\frac{1}{T}\sum_{t=1}^{T}\mathbb{P}^{\Psi_{i}}\{S_{i}(t)=s_{i}, A_{i}(t)=a_{i}\}.
\end{align}
 
An occupation measure corresponding to a policy can be considered as a probability measure over the set of state-action which determines the proportion of time that the policy spends over each joint action-state. An important property of such formulation is that the average payoff corresponding to that policy can be expressed as the expectation of instantaneous payoffs with respect to occupation measure. Now using  \cite[Theorem 4.2 and Eq. (4.3)]{altman1999constrained}, we can write the policy optimization problem for prosumer $i$ given the stationary policy $\boldsymbol{\Psi}_{-i}$ of its opponents as the solution to the following linear program:
\begin{align}\label{eq:LP}
&\max \ \sum_{(s_i,l_i)}\rho_i(s_i,a_i)K_i(s_i,a_i),\cr 
&s.t. \ \ \sum_{(s_i,a_i)}\!\!\rho_i(s_i,a_i)(\mathbb{I}_{\{x=s_i\}}-W^i_{xa_is_i})=0, \ \ \forall x\in \mathcal{S}_i, \cr 
&\ \ \ \ \ \ \ \rho_i(s_i,a_i)\ge 0,
\end{align}where $W^{(i)}_{xa_is_i}\!=\!\mathbb{P}\{S_i(t+1)\!=\!x|S_i(t)\!=\!s_i,A_i(t)\!=\!a_i\}$ denotes the probability that the state of prosumer $i$ will change from $s_i$ to $x$, by talking action $a_i$. Note that this quantity is totally determined once the distribution of generated energy $G_i$ is known. Denoting the optimal solution of the above maximization problem by ${\rm \boldsymbol{\rho}}_i:=\{\rho_i(s_i,a_i), \forall (s_i,a_i)\in \mathcal{S}_i\times \mathcal{A}_i \}$ (for notational simplicity), one can construct a stationary policy $\Psi_i$ for prosumer $i$ as:
\begin{align}\label{eq:occupation-to-policy}
\Psi_i(a_i|s_i)=\frac{\rho_i(s_i,a_i)}{\sum_{a_j\in \mathcal{A}_i}\rho_i(s_i,a_j)}.
\end{align}
As shown in \cite[Theorem 4.1]{altman1999constrained}, such a stationary policy induces the same occupation measure as $\rho_i$, which implies that the prospect payoff of the prosumer $i$ by following policy $\Psi_i$ is the same as the optimal solution of the linear program \eqref{eq:LP}, i.e., $\Psi_i$ is an optimal stationary policy for prosumer $i$. Using this characterization, we have the following theorem:  

\begin{theorem}\label{thm:existence}
\normalfont There exists a stationary NE policy for the stochastic game among prosumers with prospect payoffs \eqref{eq:discounted-payoff}. 
\end{theorem}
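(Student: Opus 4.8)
The plan is to cast the existence question as a fixed-point problem in the space of occupation measures and to invoke Kakutani's fixed-point theorem. Working in occupation-measure coordinates is advantageous because, by the derivation culminating in \eqref{eq:V_i}, each prosumer's payoff becomes \emph{linear} in its own occupation measure, while the feasible set of occupation measures is a fixed polytope. Concretely, for each prosumer $i$ let $Q_i$ denote the set of nonnegative vectors $\boldsymbol{\rho}_i=\{\rho_i(s_i,a_i)\}$ satisfying the balance constraints of \eqref{eq:LP}; since these constraints involve only the transition kernel $W^{(i)}_{xa_is_i}$ determined by $G_i$ and not the opponents, $Q_i$ is a nonempty, compact, convex polytope that does \emph{not} depend on $\boldsymbol{\Psi}_{-i}$. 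I would take $Q:=\prod_{i\in\mathcal{N}}Q_i$ as the (compact, convex) domain of the fixed-point map.

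The next step is to define the best-response correspondence on $Q$. The key observation is that the reward coefficient of prosumer $i$ can be written purely in terms of the opponents' occupation measures: using the identity $\rho_j(s_j,a_j)=\Psi_j(a_j|s_j)\pi^{\Psi_j}(s_j)$, which holds by Lemma \ref{lemm:stationary} and the construction \eqref{eq:occupation-to-policy}, we have $K_i(s_i,a_i)=\sum_{(\boldsymbol{s}_{-i},\boldsymbol{a}_{-i})} w_i\big(\prod_{j\neq i}\rho_j(s_j,a_j)\big)\,v_i(U_i(\boldsymbol{a}))$, which is a \emph{continuous} function of $\boldsymbol{\rho}_{-i}$ because $w_i$ is continuous by assumption and $v_i$ and $U_i$ are fixed. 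I would then set $\Phi_i(\boldsymbol{\rho})$ to be the set of maximizers $\arg\max_{\boldsymbol{\rho}_i\in Q_i}\sum_{(s_i,a_i)}\rho_i(s_i,a_i)K_i(s_i,a_i;\boldsymbol{\rho}_{-i})$, i.e.\ the optimal-value face of the linear program \eqref{eq:LP}, and $\Phi:=\prod_i\Phi_i$.

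It remains to verify the hypotheses of Kakutani's theorem for $\Phi$ on $Q$. Nonemptiness and compactness of $\Phi_i(\boldsymbol{\rho})$ follow since the LP maximizes a continuous (linear) objective over the nonempty compact set $Q_i$; convexity follows because the set of optimizers of a linear program is a face of the feasible polytope. Upper hemicontinuity (closed graph) is where the continuity of $K_i$ in $\boldsymbol{\rho}_{-i}$ pays off: since the objective is jointly continuous in $(\boldsymbol{\rho}_i,\boldsymbol{\rho}_{-i})$ and the constraint set $Q_i$ is constant, Berge's maximum theorem gives that $\Phi_i$ is upper hemicontinuous, and hence so is $\Phi$. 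Kakutani's theorem then yields a fixed point $\boldsymbol{\rho}^*\in\Phi(\boldsymbol{\rho}^*)$.

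Finally, I would translate the fixed point back into a stationary NE. Recovering $\Psi_i^*$ from $\boldsymbol{\rho}^*_i$ via \eqref{eq:occupation-to-policy} is well defined because Assumption \ref{ass:stationary} ($\lambda_i>0$) forces each per-state marginal $\sum_{a_i}\rho_i^*(s_i,a_i)=\pi^{\Psi_i^*}(s_i)$ to be strictly positive. By \cite[Theorem 4.1]{altman1999constrained}, $\Psi_i^*$ induces exactly the occupation measure $\boldsymbol{\rho}^*_i$, so the coefficients $K_i(\cdot;\boldsymbol{\rho}^*_{-i})$ are precisely those governing $V_i(\cdot,\boldsymbol{\Psi}^*_{-i})$ in \eqref{eq:V_i}; consequently each $\Psi^*_i$ maximizes $V_i(\cdot,\boldsymbol{\Psi}^*_{-i})$ over \emph{all} policies, since the average-reward MDP faced by prosumer $i$ admits a dominant stationary optimum. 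Thus $\boldsymbol{\Psi}^*$ is a stationary NE. I expect the main obstacle to be the closed-graph verification --- specifically, ensuring continuity of the reward coefficients $K_i$ in the opponents' occupation measures and handling the well-definedness of the policy-recovery map \eqref{eq:occupation-to-policy} on states of potentially vanishing occupation --- both of which are ultimately secured by the uniform ergodicity and positivity guaranteed by Lemma \ref{lemm:stationary} and Assumption \ref{ass:stationary}.
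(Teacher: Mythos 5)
Your proposal is correct, and it rests on the same core reformulation as the paper: pass to the ``virtual'' normal-form game whose action sets are the polytopes of feasible occupation measures, where each prosumer's payoff $\sum_{(s_i,a_i)}\rho_i(s_i,a_i)K_i(s_i,a_i)$ is linear in its own variable, establish an equilibrium there, and pull it back to a stationary NE via \eqref{eq:occupation-to-policy} together with the completeness/dominance of stationary policies \cite[Theorem 4.1]{altman1999constrained}. The difference lies in the existence step: the paper verifies the hypotheses of Rosen's concave-game theorem \cite[Theorem 1]{rosen1965existence} (compact convex action sets, own-concavity, joint continuity) and cites it, whereas you inline the underlying argument---Kakutani applied to the product best-response correspondence, with Berge's maximum theorem supplying upper hemicontinuity and the face structure of LP optimizers supplying convexity. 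Since Rosen's theorem is itself proved by a Kakutani argument, this buys self-containedness but no additional generality. Where your write-up genuinely improves on the paper is the continuity verification: by noting that $\rho_j(s_j,a_j)=\Psi_j(a_j|s_j)\pi^{\Psi_j}(s_j)$ for every feasible occupation measure (positivity of the per-state marginals being forced by Assumption \ref{ass:stationary}), you express the reward coefficients directly as $K_i(s_i,a_i;\boldsymbol{\rho}_{-i})=\sum_{(\boldsymbol{s}_{-i},\boldsymbol{a}_{-i})}w_i\bigl(\prod_{j\neq i}\rho_j(s_j,a_j)\bigr)v_i\bigl(U_i(\boldsymbol{a})\bigr)$, so continuity in $\boldsymbol{\rho}_{-i}$ is immediate from continuity of $w_i$; the paper instead routes continuity through the composite map $\boldsymbol{\rho}_{-i}\mapsto\boldsymbol{\Psi}_{-i}\mapsto\pi^{\Psi_j}$, which requires a separate auxiliary result on continuity of the stationary distribution in the policy (Lemma \ref{lemm:continuity-stationary}); your observation renders that lemma unnecessary. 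One shared blemish to fix in either version: as written, both \eqref{eq:LP} and your $Q_i$ omit the normalization $\sum_{(s_i,a_i)}\rho_i(s_i,a_i)=1$, without which the feasible set is an unbounded cone; compactness---needed for Rosen and for Kakutani alike---requires making that constraint explicit.
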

\begin{proof}
See Appendix \ref{apx:existence}. 
\end{proof}

The proof of Theorem \ref{thm:existence} follows by reformulating the original stochastic game among prosumers as a ``\textit{virtual}" game in normal form using the LP characterization given in \eqref{eq:LP}, and showing that the virtual game admits a pure-strategy NE, which in turn implies the existence of a stationary NE in the original stochastic game. The virtual game is composed of $N$ players (one for each prosumer) such that the action set of player $i$ equals to the set of feasible occupation measures for prosumer $i$ in the original stochastic game given by 

\vspace{-0.3cm}
\begin{small}
\begin{align}\label{eq:action-occupation-set}
\mathcal{M}_i\!:=\!\Big\{&\rho_i\ge 0: \!\!\sum_{(s_i,a_i)}\!\!\rho_i(s_i,a_i)(\mathbb{I}_{\{x=s_i\}}\!-\! W^{(i)}_{xa_is_i})\!=0, \forall x\in \mathcal{S}_i\Big\}.
\end{align}\end{small}In particular, an action for player $i$ in the virtual game means choosing an occupation measure for prosumer $i$. For more details about the virtual game, we refer to Appendix \ref{apx:existence}.

We note that, in general, using PT rather than EUT can eliminate the possibility of existence of an NE (see e.g., \cite{metzger2010equilibria}). However, one of the advantages of our stochastic game formulation is its rich structure which allows us to conclude the same existence results even under PT. Here, we should mention that, while dealing with the analytical study of games under PT is more challenging, this is not the main reason why we incorporated the role of PT into our model. In fact, PT captures real-world subjective behavior of humans \cite{kahneman1979prospect} with substantial implications in real life events, as we will see under our model in Section \ref{sec:simulation}.

Theorem \ref{thm:existence} yields two key insights: 1) It guarantees existence of simple policies for the prosumers such that simultaneously satisfy all the prosumers. As a result, the prosumers do not really need to use sophisticated time dependent decision rules in order to maximize their payoffs and simple stationary policies are good enough. 2) It shows that there exists a way to stabilize the electricity market where the prosumers have no further incentive to change their energy consumption or energy demand policies. Consequently, this facilitates the prediction of energy production and distribution for the utility company which, in turn, improves the availability of energy to prosumers, and hence the reliability of the entire grid. Now, an important question is to see whether the prosumers can jointly reach one of such equilibrium points as studied next.

\section{Learning an $\epsilon$-Nash Equilibrium Policy with Limited Information}\label{sec:learning-prosumers}

As discussed in Section \ref{sec:NE}, despite the fact that prosumers cannot observe the entire state of the game or each others' actions, there still exists a stationary NE in the system. In fact, this is one of the main reasons why such a stochastic game is desirable for energy trading in smart grids. This is because, in reality, the prosumers do not really have access to private information of others such as their stored energy or consumption level. Even if such information were available, to avoid extra computations or complete trust on the released data, the prosumers may not want to use them as a baseline in order to select their own policies. Moreover, once the electricity market reaches a stable NE, all the prosumers can benefit in a sense that everyone will be satisfied with its own payoff given that others do not deviate. Therefore, an important question one may ask is how can prosumers achieve one of such NE policies with very limited information about each other, i.e., by only observing their own instantaneous payoffs. This mandates designing a suitable learning algorithm under stochastic setting in which prosumers can infer necessary information for steering their policies toward an NE by only observing their own instantaneous payoffs. In this section, we propose a distributed algorithm that enables the prosumers to jointly achieve a policy profile with almost the same performance as an NE, i.e., an $\epsilon$-NE policy:

\begin{definition}
\normalfont Given $\epsilon>0$, a policy profile $\tilde{\boldsymbol{\Psi}}$ is called an $\epsilon$-NE if no player can unilaterally improve its prospect payoff by more than $\epsilon$, i.e., $V_i(\tilde{\Psi}_i,\tilde{\boldsymbol{\Psi}}_{-i})\ge V_i(\Psi_i,\tilde{\boldsymbol{\Psi}}_{-i})-\epsilon$.
\end{definition}


Before we proceed further, we first state some of the algorithm design challenges. The proposed algorithm must operate under the least amount of \emph{information exchange}. In other words, following a policy, at each time $t=1,2,\ldots$, each prosumer can only observe its own instantaneous prospect payoff at that time, but it has no information about the payoffs received by the others, nor can it observe what policies or actions are taken by others. In particular, a prosumer does not know the structure of its own payoff function and can only receive its value by taking one action at each time $t$. To address this issue, we use some techniques from multi-armed bandit problems \cite{cesa2006prediction}. 


The key idea in our algorithm is that instead of dealing with an infinite horizon stochastic model for the original stochastic game, we devise a learning algorithm for its virtual counterpart. As previously discussed, one can reformulate the stochastic game among prosumers as a normal form game with occupation measures being players' actions. Therefore, our goal is to provide a scheme such that as long as it is followed by prosumers, they repeatedly learn the equilibrium occupation measures. Once such measures are learned, each prosumer can reconstruct its own stationary policy at equilibrium using the rule \eqref{eq:occupation-to-policy}. Here, we note that although the original stochastic game can be played for infinitly many steps, but its equivalent virtual game can be played only once. This is because the virtual game is not a repeated game but a game which compresses all the information (such as payoffs) of the infinite horizon stochastic game into only one shot. Thus, we must extract the required information in our algorithm without playing the virtual game repeatedly and by only playing the original stochastic game. The key observation here is that any time during which we need to evaluate the payoffs received in the virtual game, we do not need to play the original stochastic game until the end. However, playing for a sufficiently long time will give us enough information for the purpose of our algorithm design. This will be formally stated in the following lemma: 

\begin{lemma}\label{lemm:epsilon-value-error}
\normalfont Given $\epsilon>0$, there exists a time period $T(\epsilon)$ such that for any profile of stationary policies $\boldsymbol{\Psi}$ if the prosumers follow $\boldsymbol{\Psi}$ for $T(\epsilon)$ steps without changing policy, then 
\begin{align}\nonumber
|V_i(\boldsymbol{\Psi})-V^{T(\epsilon)}_i(\boldsymbol{\Psi})|<\epsilon,\ \  \forall i=1,\ldots,N,
\end{align}
where $V^{T(\epsilon)}_i(\boldsymbol{\Psi}):=\frac{1}{T(\epsilon)}\sum_{t=1}^{T(\epsilon)}\mathbb{E}^{^{\rm PT}}_{\boldsymbol{\Psi},\boldsymbol{G}}\big[U_i\big(\boldsymbol{A}(t)\big)\big]$ denotes the cumulative payoff of prosumer $i$ up to $T(\epsilon)$ steps.
\end{lemma}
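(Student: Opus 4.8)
The plan is to show that the per-step prospect payoff converges, as $t\to\infty$, to the stationary value $V_i(\boldsymbol{\Psi})$ at a rate that does not depend on the policy profile, and then average. First I would recall from \eqref{eq:prospect-evaluation} that, for a stationary profile $\boldsymbol{\Psi}$, the instantaneous prospect payoff is
\[
g_i(t) := \sum_{(\boldsymbol{s},\boldsymbol{a})} \Psi_i(a_i|s_i)\,\mathbb{P}^{\Psi_i}\{S_i(t)=s_i\}\, w_i\Big(\prod_{j\neq i}\Psi_j(a_j|s_j)\,\mathbb{P}^{\Psi_j}\{S_j(t)=s_j\}\Big) v_i\big(U_i(\boldsymbol{a})\big),
\]
so that $V^{T}_i(\boldsymbol{\Psi})=\frac{1}{T}\sum_{t=1}^{T} g_i(t)$. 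The derivation leading to \eqref{eq:V_i} shows that $g_i(t)$ converges to
\[
g_i^\ast := \sum_{(\boldsymbol{s},\boldsymbol{a})} \Psi_i(a_i|s_i)\,\pi^{\Psi_i}(s_i)\, w_i\Big(\prod_{j\neq i}\Psi_j(a_j|s_j)\,\pi^{\Psi_j}(s_j)\Big) v_i\big(U_i(\boldsymbol{a})\big),
\]
obtained by replacing each marginal $\mathbb{P}^{\Psi_k}\{S_k(t)=s_k\}$ by its stationary value $\pi^{\Psi_k}(s_k)$; by \eqref{eq:V_i} this equals $\sum_{(s_i,a_i)}\Psi_i(a_i|s_i)\pi^{\Psi_i}(s_i)K_i(s_i,a_i)=V_i(\boldsymbol{\Psi})$. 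Hence $|V_i-V^T_i|\le \frac{1}{T}\sum_{t=1}^{T}|g_i(t)-g_i^\ast|$, and it suffices to bound $|g_i(t)-g_i^\ast|$ by a policy-independent quantity that tends to $0$.

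For this bound I would use three facts. The joint state and action spaces are finite, so $M_i:=\max_{\boldsymbol{a}}|v_i(U_i(\boldsymbol{a}))|$ is finite; $w_i$ is continuous on the compact interval $[0,1]$, hence bounded by some $\bar{w}_i$ and uniformly continuous with a modulus $\omega_i$ satisfying $\omega_i(\eta)\to 0$ as $\eta\to 0^{+}$; and, by Lemma \ref{lemm:stationary}, $|\mathbb{P}^{\Psi_k}\{S_k(t)=s_k\}-\pi^{\Psi_k}(s_k)|\le (1-\lambda_k)^t\le \delta(t)$ for every $k$ and every stationary policy, where $\delta(t):=(1-\min_k\lambda_k)^t$. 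Writing each summand of $g_i(t)-g_i^\ast$ as
\[
\big[\mathbb{P}^{\Psi_i}\{S_i(t)=s_i\}-\pi^{\Psi_i}(s_i)\big]\,w_i(W_t) + \pi^{\Psi_i}(s_i)\,\big[w_i(W_t)-w_i(W_\infty)\big],
\]
with $W_t$ and $W_\infty$ the products over $j\neq i$ of the time-$t$ and stationary quantities $\Psi_j(a_j|s_j)\mathbb{P}^{\Psi_j}\{S_j(t)=s_j\}$ and $\Psi_j(a_j|s_j)\pi^{\Psi_j}(s_j)$, the first bracket is at most $\delta(t)\,\bar{w}_i$. Since all factors lie in $[0,1]$, the telescoping inequality $|\prod_j x_j-\prod_j y_j|\le\sum_j|x_j-y_j|$ gives $|W_t-W_\infty|\le (N-1)\delta(t)$, so the second bracket is at most $\omega_i((N-1)\delta(t))$ by uniform continuity. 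Summing over the finitely many pairs $(\boldsymbol{s},\boldsymbol{a})$ and using $\Psi_i(a_i|s_i)\le 1$ yields $|g_i(t)-g_i^\ast|\le C_i\big(\delta(t)+\omega_i((N-1)\delta(t))\big)=:h_i(t)$, where $C_i$ collects $M_i$, $\bar{w}_i$ and the (finite) cardinality of the joint state-action space, and is independent of $\boldsymbol{\Psi}$.

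Finally, since $\delta(t)\to 0$ and $\omega_i$ is continuous at $0$, we have $h_i(t)\to 0$, so its Cesàro mean also vanishes, i.e. $\frac{1}{T}\sum_{t=1}^{T}h_i(t)\to 0$ as $T\to\infty$ (indeed $\frac{1}{T}\sum_t\delta(t)=O(1/T)$ because $\delta$ is geometric, and the Cesàro average of the vanishing sequence $\omega_i((N-1)\delta(t))$ tends to $0$). Thus for each $i$ there is $T_i(\epsilon)$ with $\frac{1}{T}\sum_{t=1}^{T}h_i(t)<\epsilon$ for all $T\ge T_i(\epsilon)$, and taking $T(\epsilon):=\max_{1\le i\le N}T_i(\epsilon)$ proves the claim simultaneously for all prosumers and every stationary profile. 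The step I expect to require the most care is the \emph{uniformity} over $\boldsymbol{\Psi}$: the argument works only because both the geometric rate $(1-\lambda_k)^t$ supplied by Lemma \ref{lemm:stationary} and the modulus $\omega_i$ of $w_i$ are policy-independent, which is exactly what lets a single $T(\epsilon)$ serve all profiles; treating $w_i$ through uniform continuity rather than Lipschitzness is the cleanest way to respect the paper's minimal assumption that $w_i$ is merely continuous.
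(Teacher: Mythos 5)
Your proof is correct and follows essentially the same route as the paper's: both compare the time-$t$ state distributions with the stationary ones via Lemma \ref{lemm:stationary}, split each summand (by adding and subtracting) into a probability-difference term and a $w_i$-difference term controlled by uniform continuity of $w_i$ on $[0,1]$, and absorb the early burn-in steps through time averaging. The only differences are technical rather than structural: you bound the product discrepancy with the telescoping inequality $|\prod_j x_j-\prod_j y_j|\le\sum_j|x_j-y_j|$ where the paper invokes a mean value theorem argument, and you conclude with an abstract Ces\`aro-convergence step where the paper makes the explicit quantitative choice $T(\epsilon)\ge 3T\Delta_1\Delta_2|\mathcal{S}||\mathcal{A}|/\epsilon$.
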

\begin{proof}
See Appendix \ref{apx:epsilon-value-error}.
\end{proof}

Finally, unlike many multi-armed bandit problems with finite action space, the considered virtual game contains a continuum of actions $\mathcal{M}_i$ given by \eqref{eq:action-occupation-set}. In this regard, the optimization problem of finding the best policy for each prosumer can be formulated as a linear program over a bounded polytope $\mathcal{M}_i$. Since each linear program attains its maximum value in at least one of its vertex (extreme) points, to estimate the best response of each player in the virtual game, we only need to estimate the payoffs received by playing the vertex points of $\mathcal{M}_i$. Note that since each of polytopes $\mathcal{M}_i$ is defined using finitely many linear constraints, the total set of vertex points are finite. In fact. without any loss of generality we may assume that all the action polytopes $\mathcal{M}_i, i=1,\ldots,N$, have the same number of vertex points $r\in \mathbb{N}$, otherwise, we take $r$ to be an upper bound on the total number of vertices.    



The proposed distributed learning algorithm is shown in Algorithm \ref{alg:1}. In this algorithm, each prosumer $i$ holds a sufficiently large $T(\epsilon)$ (as in Lemma \ref{lemm:epsilon-value-error})\footnote{Even if $T(\epsilon)$ is not known a priori, the prosumers can use a doubling trick to find the right $T(\epsilon)$ by doubling their period length every time that Algorithm 1 does not terminate after a sufficiently long time, in which case after at most $\log_2(T(\epsilon))$ times restarting Algorithm 1, all prosumers will find a valid period length $T(\epsilon)$.} and computes the set of all the vertex points of its action set in the virtual game denoted by $\mathcal{F}_i$. This set can be fully determined by prosumer $i$ using its own internal Markov chain once the distribution of $G_i$ is known. After that, each prosumer partitions the entire horizon into ``larger" intervals of the form $[(m-1)(Nr+1)T(\epsilon)+1, m(Nr+1)T(\epsilon)), m=1,2,\ldots$, each of which is partitioned into $Nr+1$ smaller sub-intervals $Z_1,\ldots,Z_{Nr+1}$ of length $T(\epsilon)$. At the beginning of each larger interval, each prosumer selects an occupation measure $\rho_i(m-1)$ from $\mathcal{M}_i$ uniformly at random and constructs its corresponding stationary policy $\Psi_i(m-1)$ using \eqref{eq:occupation-to-policy}. After that during the smaller sub-intervals each prosumer evaluates the optimality of its sampled policy with respect to others by playing its vertex policies within certain sub-intervals, as has been illustrated for the case of 3 prosumers in Figure \ref{fig:alg1_ex}.

\begin{algorithm}[t!]
\caption{Strategy for Prosumer $i$}
\label{alg:1}
\begin{algorithmic}[0]
\State {\bfseries Parameters} Accuracy parameter $\epsilon$, number of players $n$, period length $T(\epsilon)$.
\State {\bfseries Initialize:} Let $\mathcal{F}_i=\{\rho_{i_1},...,\rho_{i_{r}}\}$ be the vertex points of action set $\mathcal{M}_i$, and $\{\Psi_{i_1},...,\Psi_{i_{r}}\}$ be the corresponding vertex (extreme) policies obtained by \eqref{eq:occupation-to-policy}. Choose an action $\rho_i(0)$ uniformly at random and independently from polytope $\mathcal{M}_i$ and compute $\Psi_i(0)$ using \eqref{eq:occupation-to-policy}.   
\For{$t = 1,2,...$}.
\State If $t\in [(m-1)(Nr+1)T(\epsilon)+1, m(Nr+1)T(\epsilon))$ for some integer $m\ge 0$, partition this interval into $Nr+1$ sub-intervals $Z_1,...,Z_{Nr+1}$, each of length $T(\epsilon)$. Follow policies $\Psi_{i_1},...,\Psi_{i_{r}}$ in the sub-intervals $Z_{(i-1)r+1},\ldots,Z_{ir}$, respectively, and policy $\Psi_{i}(m-1)$ in the remaining sub-intervals. Denote the policy of all other prosumers in the $j$th sub-interval by $\boldsymbol{\Psi}_{-i}(Z_j)$. Let
\begin{itemize}
\item $\hat{V}_{i_k}^{(m)}=V^{T(\epsilon)}_i(\boldsymbol{\Psi}_{-i}(Z_{(i-1)r+k}),\Psi_{i_k})$
\item $\hat{V}_i^{(m)}=V^{T(\epsilon)}_i(\boldsymbol{\Psi}_{-i}(Z_{Nr+1}),\Psi_{i}(m-1))$
\end{itemize}
\State If $t=m(Nr+1)T(\epsilon)$ for an integer $m\ge 1$, then let 
\begin{align}\nonumber
q^{(m)}_i = \begin{cases} 1 & \mbox{if} \ \ \hat{V}_i^{(m)}>\max\limits_{k=1,\ldots,r}\hat{V}_{i_k}^{(m)}-\epsilon, \\ 
0 & \mbox{otherwise}; \end{cases} 
\end{align}
\State If $t=m(Nr+1)T(\epsilon)$ and all players have played 1, then keep playing $\Psi_i(m-1)$ forever, otherwise, choose $\rho_i(m)$ randomly according to the uniform distribution over $\mathcal{M}_i$.  
\EndFor
\end{algorithmic}
\end{algorithm}

\setlength{\textfloatsep}{10pt}

More specifically, the goodness of the selected sampled policy $\Psi_i(m-1)$ at the beginning of the $m$th larger interval is examined over its smaller sub-intervals. For this purpose, each prosumer switches its policy only at the beginning of sub-intervals. Since each sub-interval is long enough, sticking to a fixed policy in that sub-interval guarantees that a prosumer can estimate its payoff for that fixed policy up to a small additive error (Lemma \ref{lemm:epsilon-value-error}). Moreover, prosumer $i$ plays the randomly selected policy $\Psi_i(m-1)$ in all the sub-intervals, except $r$ of them, namely, \textit{sampling} sub-intervals, in which it plays policies $\{\Psi_{i_1},...,\Psi_{i_{r}}\}$. Note that each prosumer $i$ can find its vertex policies $\Psi_{i_1},\ldots,\Psi_{i_r}$ using $\mathcal{M}_i$ which is fully known to it at the beginning of Algorithm \ref{alg:1} and do not change. Since the sampling intervals of different prosumers do not overlap (see, Figure \ref{fig:alg1_ex}), this allows each prosumer to get an estimate within an small error of its best response policy. Comparing the estimated payoffs received by following $\Psi_i(m-1)$ with those of following $\{\Psi_{i_1},...,\Psi_{i_{r}}\}$, prosumer $i$ can decide whether it is playing its best policy or not (up to a small additive term $\epsilon$). Finally, at the end of each long period, i.e., $t=m(Nr+1)T(\epsilon))$, if all the prosumers are happy with their payoffs, i.e. $q^{(m)}_i=1, \forall i$, this means that an $\epsilon$-NE is obtained and the prosumers do not need to change their policies. Otherwise, they randomly choose new policies to explore more.

Here, $q^{(m)}_i$ can be viewed as a signaling bit which is a binary variable controlled by prosumer $i$ whose value is updated at the end of each longer interval $m=1,2,\ldots$. The value of this bit is always set to zero by player $i$ except when $\hat{V}_i^{(m)}>\max\limits_{k=1,\ldots,r}\hat{V}_{i_k}^{(m)}-\epsilon$, which means that prosumer $i$'s payoff by following policy $\Psi_i(m-1)$ during the $m$th longer interval is at least as good as that for all of its vertex policies minus $\epsilon$. The signaling bit $q^{(m)}_i$ can be thought of as a satisfaction voting survey conducted periodically among prosumers by the utility company in order to stabilize the market faster. At the end of each longer interval $m=1,2,\ldots$, each prosumer $i$ sends its signaling bit to the utility company ($q^{(m)}_i=0$ means that prosumer $i$ is not satisfied with its current policy $\Psi_i(m-1)$, while $q^{(m)}_i=1$ means that it is satisfied). Then, the utility company responds back to all the prosumers by sending them a bit $q^{(m)}=\prod_{i=1}^{N}q^{(m)}_i$. If $q^{(m)}=1$, meaning that all
players' signaling bits are equal to 1 in the $m$th period, then players stop searching and stick to their current policies $\boldsymbol{\Psi}(m-1)$. Otherwise, players will continue exploring other policies by sampling from their occupation measure sets in the next period. The convergence of Algorithm \ref{alg:1} can be shown formally in the following theorem: 
\begin{theorem}\label{thm-alg-convergence}
\normalfont If all the prosumers choose their policies based on Algorithm 1, then almost surely they will converge to an $\epsilon$-NE policy which will be played forever.  
\end{theorem}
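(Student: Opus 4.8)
The plan is to carry out everything inside the \emph{virtual} normal-form game from \eqref{eq:action-occupation-set}, in which player $i$'s action is an occupation measure $\rho_i\in\mathcal{M}_i$, and to split the claim into two halves. The first is a \textbf{correctness} statement: whenever Algorithm~\ref{alg:1} halts (i.e.\ $q^{(m)}_i=1$ for all $i$ at the end of some period $m$), the frozen profile $\boldsymbol{\Psi}(m-1)$ is an $\epsilon$-NE up to an absolute constant multiple of $\epsilon$. The second is a \textbf{termination} statement: this halting event occurs after finitely many periods almost surely. These two together give the theorem, because once the utility company broadcasts $q^{(m)}=\prod_i q^{(m)}_i=1$ every prosumer keeps playing $\Psi_i(m-1)$ forever by construction, so the (random) limit is a single fixed $\epsilon$-NE policy.

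First I would prove correctness. With the opponents' stationary policies $\boldsymbol{\Psi}_{-i}(m-1)$ held fixed, prosumer $i$'s best-response problem is exactly the linear program \eqref{eq:LP} over the bounded polytope $\mathcal{M}_i$, so its optimum is attained at one of the finitely many vertices $\rho_{i_1},\ldots,\rho_{i_r}$; hence the true best-response value equals $\max_k V_i(\boldsymbol{\Psi}_{-i}(m-1),\Psi_{i_k})$, and comparing against the vertex policies — which is precisely what the $q^{(m)}_i$ test does — certifies best response. Because the sampling sub-intervals of distinct prosumers do not overlap, during $Z_{(i-1)r+k}$ all opponents play $\boldsymbol{\Psi}_{-i}(m-1)$, so $\hat V^{(m)}_{i_k}$ and $\hat V^{(m)}_i$ are evaluated against the \emph{same} opponent profile, and Lemma~\ref{lemm:epsilon-value-error} bounds each $T(\epsilon)$-horizon estimate within a controllable fraction of $\epsilon$ of the corresponding $V_i$. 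Propagating these estimation errors and the slack in the test shows that $q^{(m)}_i=1$ forces $V_i(\boldsymbol{\Psi}_{-i}(m-1),\Psi_i(m-1))\ge \max_k V_i(\boldsymbol{\Psi}_{-i}(m-1),\Psi_{i_k})-O(\epsilon)$; since by the completeness of stationary policies for the MDP \eqref{eq:LP} no arbitrary deviation beats the best vertex, this makes $\Psi_i(m-1)$ an $O(\epsilon)$-best response, and doing it for all $i$ yields an $\epsilon$-NE after rescaling the accuracy parameter.

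Next I would prove termination using the existence guarantee of Theorem~\ref{thm:existence}: let $\boldsymbol{\rho}^\ast\in\prod_i\mathcal{M}_i$ be an equilibrium occupation profile of the virtual game and define the regret map $g_i(\boldsymbol{\rho}):=\max_k V_i(\boldsymbol{\rho}_{-i},\rho_{i_k})-V_i(\boldsymbol{\rho})$. Using the identity $\rho_j(s_j,a_j)=\Psi_j(a_j\mid s_j)\,\pi^{\Psi_j}(s_j)$, the closed form \eqref{eq:V_i} writes $V_i=\sum_{(s_i,a_i)}\rho_i(s_i,a_i)\,K_i(s_i,a_i)$ with $K_i(s_i,a_i)=\sum_{(\boldsymbol{s}_{-i},\boldsymbol{a}_{-i})} w_i\big(\prod_{j\neq i}\rho_j(s_j,a_j)\big) v_i\big(U_i(\boldsymbol{a})\big)$, which is continuous in $\boldsymbol{\rho}$ because $w_i$ and $v_i$ are continuous; hence each $g_i$ is continuous and $g_i(\boldsymbol{\rho}^\ast)=0$ (the LP attains its max at a vertex). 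Therefore $\mathcal{U}:=\{\boldsymbol{\rho}:g_i(\boldsymbol{\rho})<\epsilon/2\ \forall i\}$ is a relatively open neighborhood of $\boldsymbol{\rho}^\ast$ in $\prod_i\mathcal{M}_i$ and so carries positive mass $p>0$ under the product of the uniform distributions on the $\mathcal{M}_i$ (any nonempty relatively open subset of a polytope has positive relative Lebesgue measure). Since at each period the prosumers draw $\rho_i(m-1)$ independently and uniformly, the profile lands in $\mathcal{U}$ with probability at least $p$ independently across periods, and on $\mathcal{U}$ the estimation errors of Lemma~\ref{lemm:epsilon-value-error} (taken below $\epsilon/4$ via the choice of $T(\epsilon)$) force $q^{(m)}_i=1$ for every $i$. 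The halting period is thus dominated by a geometric random variable with parameter $p$, so by Borel--Cantelli it is finite almost surely, and thereafter the profile is played forever.

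The hard part will be the termination half, specifically the uniform positive lower bound $p$ on the per-period halting probability. Its two ingredients each need care: establishing genuine continuity of the prospect value $V_i$ as a function of the occupation profile (so the exactly-zero regret at $\boldsymbol{\rho}^\ast$ persists below the threshold on a whole neighborhood), which rests on continuity of $w_i,v_i$ together with continuous dependence of $\pi^{\Psi_j}$ on $\Psi_j$; and verifying that this neighborhood meets $\prod_i\mathcal{M}_i$ in a set of positive relative measure even when $\boldsymbol{\rho}^\ast$ lies on the relative boundary or at a vertex of the product polytope. Once $p>0$ is in hand, the independence of the fresh uniform samples across periods makes the geometric/Borel--Cantelli conclusion routine, and the correctness half reduces to bookkeeping on error propagation built on the LP/vertex structure of \eqref{eq:LP} and the estimation guarantee of Lemma~\ref{lemm:epsilon-value-error}.
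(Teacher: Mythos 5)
Your proposal matches the paper's own proof essentially step for step: the same split into (i) correctness --- halting certifies an approximate best response because the LP \eqref{eq:LP} attains its optimum at a vertex of $\mathcal{M}_i$, the non-overlapping sampling sub-intervals keep the opponent profile fixed, and Lemma \ref{lemm:epsilon-value-error} controls the $T(\epsilon)$-horizon estimates --- and (ii) termination --- Theorem \ref{thm:existence} supplies a pure NE $\boldsymbol{\rho}^*$ of the virtual game, continuity of the payoffs in the occupation measures yields a neighborhood of $\epsilon$-NE profiles hit with positive probability by the uniform resampling, and a geometric/Borel--Cantelli argument finishes. Your explicit regret map $g_i$ and your attention to positive relative measure when $\boldsymbol{\rho}^*$ sits on the boundary of the polytope are slightly more careful renderings of steps the paper asserts directly, but the route is the same.
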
 
\begin{proof}
See Appendix \ref{apx:alg-convergence}.
\end{proof}
Finally, we should mention that one of the advantages of our proposed approach here was to introduce a distributed learning algorithm that converges to an $\epsilon$-NE policy under PT, although finding closed form solutions for NE policies under PT seems to be more complicated than under EUT due to the extra nonlinearities in the problem.

\begin{figure}[t!]
\vspace{-5.5cm}
\begin{center}
\includegraphics[totalheight=.31\textheight,
width=.44\textwidth,viewport=20 0 920 900]{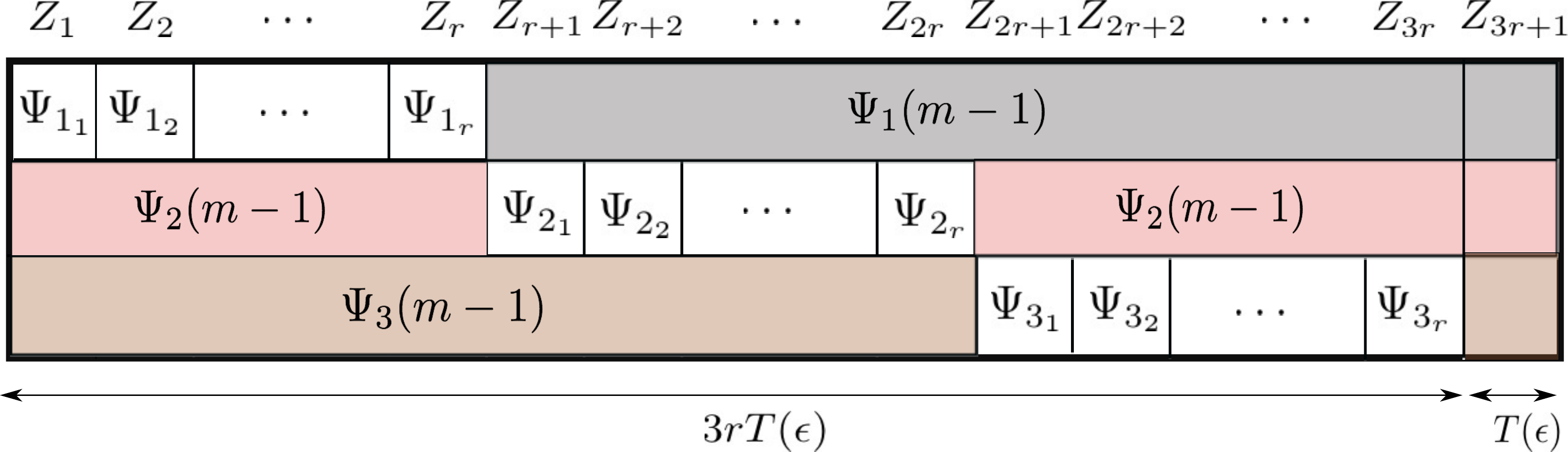} \hspace{0.4in}
\end{center}\vspace{-0.3cm}
\caption{Illustration of sampling intervals in Algorithm \ref{alg:1} for the case of three prosumers ($N=3$) in the larger interval $t\in [(m-1)(3r+1)T(\epsilon)+1, m(3r+1)T(\epsilon))$. Based on this structure, prosumer 1 can compute $\hat{V}_{1_1}^{(m)},\ldots,\hat{V}_{1_r}^{(m)}$ using the policies in the first $r$ columns. Similarly, prosumers 2 and 3 can compute $\hat{V}_{2_1}^{(m)},\ldots,\hat{V}_{2_r}^{(m)}$, and $\hat{V}_{3_1}^{(m)},\ldots,\hat{V}_{3_r}^{(m)}$ within the columns $r+1,\ldots,2r$, and $2r+1,\ldots,3r$, respectively. Finally, each prosumer $i\in[3]$ can compute $\hat{V}_i^{(m)}$ within the last column $3r+1$.}
\label{fig:alg1_ex}
\end{figure}

\section{No Regret Algorithm for the Utility Company}\label{sec:online-convex}

Here, we investigate the optimization problem of the utility company and provide an algorithm for the utility company whose average regret goes to zero as the number of interaction stages increases. Therefore, in the remainder of this section our goal is to provide an adaptive online algorithm which guarantees the average regret of the utility company approaches zero as its number of interactions with the prosumers becomes large. In other words, we show that the utility company can follow a strategy whose average cost is the same as its best fixed strategy if the entire sequence of demands were known. To this end, we consider energy allocation algorithm shown in Algorithm \ref{alg:2} for the utility company which generates and assigns $e_{\ell}(t)$ units of energy at time instant $t$ to the $\ell$th substation.

Although there could be different strategies that the utility company can follow in order to minimize its overall regret, the following theorem asserts that Algorithm \ref{alg:2} provides one such strategies. It is worth noting that in Algorithm \ref{alg:2} the utility company first decides on its energy allocation rule at time $t$ using \eqref{eq:online-alg-2} without actually knowing the demands of prosumers at that time. However, it turns out that the updating rule \eqref{eq:online-alg-2} is sufficient enough to minimize the average regret after sufficiently large number of iterations. As another important feature of Algorithm \ref{alg:2} one can see that it is computationally very cheap and tractable, as it only requires projection of a point on a convex set at each time instant which can be done quite efficiently. 

\begin{algorithm}[t!]
\caption{Utility Company's Allocation Algorithm}
\label{alg:2}
\footnotesize Upon receiving demands $\{\boldsymbol{D}(1),\ldots,\boldsymbol{D}(t-1)\}$ from the prosumers, and allocating energy units $\{\boldsymbol{e}(1),\ldots,\boldsymbol{e}(t-1)\}$ up to time instant $t-1$, the utility company generates and assigns $e_{\ell}(t)$ units of energy at time instant $t$ to the $\ell$th substation given by  
\begin{align}\label{eq:online-alg-2}
\boldsymbol{e}(t)=\prod\nolimits_{\mathcal{E}}\left[\boldsymbol{e}(t\!-\!1)\!-\!\frac{1}{\sqrt{t-1}}\nabla_{\boldsymbol{e}} C\big(\boldsymbol{D}(t\!-\!1),\boldsymbol{e}(t\!-\!1)\big)\right]\!,
\end{align}
where $\Pi_{\mathcal{\mathcal{E}}}[\cdot]$ denotes the projection operator on the set $\mathcal{E}$ given by \eqref{eq:action-set-utility-company}, and $\nabla_{\boldsymbol{e}}C(\cdot)$ denotes the gradient of function $C(\cdot)$ with respect to variable $\boldsymbol{e}$.
\end{algorithm}

\begin{theorem}\label{thm:online-algorithm-utility}
\normalfont The average regret of the utility company by following Algorithm \ref{alg:2} goes to zero as the number of interaction stages increases. In particular, the regret of the utility company after $T$ interactions with the prosumers is bounded above by $O(\sqrt{T})$.  
\end{theorem}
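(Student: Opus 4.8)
The plan is to recognize Algorithm \ref{alg:2} as \emph{online gradient descent} (OGD) applied to the sequence of per-stage loss functions $f_t(\boldsymbol{e}):=C(\boldsymbol{D}(t),\boldsymbol{e})$, and to invoke the classical online convex optimization regret bound (Zinkevich). Concretely, I would set the comparator to be the best fixed allocation in hindsight, $\boldsymbol{x}^*\in\argmin_{\boldsymbol{x}\in\mathcal{E}}\sum_{t=1}^{T}C(\boldsymbol{D}(t),\boldsymbol{x})$, so that the regret is $R_T=\sum_{t=1}^{T}f_t(\boldsymbol{e}(t))-\sum_{t=1}^{T}f_t(\boldsymbol{x}^*)$. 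The whole proof rests on three structural facts: (i) each $f_t$ is convex in $\boldsymbol{e}$, (ii) the feasible set $\mathcal{E}$ is compact and convex with bounded diameter $D$, and (iii) the gradients $\nabla_{\boldsymbol{e}}f_t$ are uniformly bounded on $\mathcal{E}$ by some $G$. Given these, the OGD analysis with step size $\eta_t=1/\sqrt{t-1}$ yields $R_T=O(\sqrt{T})$, whence the average regret $R_T/T=O(1/\sqrt{T})\to 0$.

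First I would verify the three hypotheses directly from \eqref{eq:company-cost}. For (i): the term $\beta\sum_{\ell}e_{\ell}$ is linear in $\boldsymbol{e}$, the income term $\sum_i D_i(t)p_i(\boldsymbol{D}(t))$ does not depend on $\boldsymbol{e}$, and the prediction-error term $\gamma\sum_{\ell}(\sum_{j\in\mathcal{B}_{\ell}}D_j(t)-e_{\ell})^2$ is a nonnegative sum of squares, hence convex; therefore $f_t$ is convex (indeed $2\gamma$-strongly convex, though only convexity is needed here). For (ii): the set $\mathcal{E}$ in \eqref{eq:action-set-utility-company} is the intersection of the nonnegative orthant with the half-space $\sum_{\ell}e_{\ell}\leq E_{\max}$, hence a compact convex polytope with diameter $D\leq\sqrt{2}\,E_{\max}$. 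For (iii): since each demand lies in the finite set $\mathcal{D}_i=\{0,\ldots,D_i^{\max}\}$ and each $e_{\ell}\in[0,E_{\max}]$, the partial derivative $\partial f_t/\partial e_{\ell}=\beta-2\gamma(\sum_{j\in\mathcal{B}_{\ell}}D_j(t)-e_{\ell})$ is bounded uniformly in $t$ and $\boldsymbol{e}$, giving $\|\nabla_{\boldsymbol{e}}f_t(\boldsymbol{e})\|\leq G$ with $G$ depending only on $\beta,\gamma,E_{\max}$ and $\max_i D_i^{\max}$.

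The core is then the standard telescoping distance estimate. Using the nonexpansiveness of the Euclidean projection $\Pi_{\mathcal{E}}$ (valid since $\boldsymbol{x}^*\in\mathcal{E}$) and the update \eqref{eq:online-alg-2} with $\eta_t=1/\sqrt{t-1}$, I expand
\begin{align}
\|\boldsymbol{e}(t\!+\!1)-\boldsymbol{x}^*\|^2
&\leq \|\boldsymbol{e}(t)-\eta_t\nabla_{\boldsymbol{e}}f_t(\boldsymbol{e}(t))-\boldsymbol{x}^*\|^2 \nonumber \\
&= \|\boldsymbol{e}(t)-\boldsymbol{x}^*\|^2
-2\eta_t\big\langle\nabla_{\boldsymbol{e}}f_t(\boldsymbol{e}(t)),\boldsymbol{e}(t)-\boldsymbol{x}^*\big\rangle
+\eta_t^2\|\nabla_{\boldsymbol{e}}f_t(\boldsymbol{e}(t))\|^2. \nonumber
\end{align}
Rearranging and using convexity in the form $f_t(\boldsymbol{e}(t))-f_t(\boldsymbol{x}^*)\leq\langle\nabla_{\boldsymbol{e}}f_t(\boldsymbol{e}(t)),\boldsymbol{e}(t)-\boldsymbol{x}^*\rangle$ bounds each instantaneous gap by $\tfrac{1}{2\eta_t}(\|\boldsymbol{e}(t)-\boldsymbol{x}^*\|^2-\|\boldsymbol{e}(t\!+\!1)-\boldsymbol{x}^*\|^2)+\tfrac{\eta_t}{2}G^2$. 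Summing over $t=1,\ldots,T$, the first group telescopes, and the time-varying coefficient $1/\eta_t$ contributes only a bounded correction of order $D^2/\eta_T=O(D^2\sqrt{T})$ because $1/\eta_t$ is increasing and the squared distances are bounded by $D^2$; the second group contributes $\tfrac{G^2}{2}\sum_{t}\eta_t=O(G^2\sqrt{T})$ since $\sum_{t=1}^{T}1/\sqrt{t}=O(\sqrt{T})$. Combining gives $R_T\leq O((D^2+G^2)\sqrt{T})=O(\sqrt{T})$, so $R_T/T\to 0$.

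I do not anticipate a deep obstacle: this is essentially a textbook OGD regret bound, and the only genuine care is in (a) establishing the uniform gradient bound $G$, which relies precisely on the boundedness of the discrete demand sets $\mathcal{D}_i$ and of $\mathcal{E}$, and (b) the bookkeeping of the time-varying step size in the telescoping sum. The one conceptual point worth emphasizing is that the guarantee is \emph{pathwise} and holds against an arbitrary, even adversarial, sequence $\{\boldsymbol{D}(t)\}$; consequently it remains valid no matter how the demands are generated by the stochastic game among the prosumers, which is exactly the robustness asserted in the theorem statement.
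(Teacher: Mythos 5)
Your proposal is correct and follows essentially the same route as the paper: verify that each $C(\boldsymbol{D}(t),\cdot)$ is convex in the allocation, that $\mathcal{E}$ is a compact convex set of bounded diameter, and that the gradients are uniformly bounded thanks to the bounded demand sets, then apply the standard online-gradient-descent regret bound with step size $1/\sqrt{t}$ to conclude $R_T=O(\sqrt{T})$. The only difference is cosmetic: the paper invokes Zinkevich's bound as a cited lemma, whereas you re-derive it inline via projection nonexpansiveness and the telescoping distance argument (and you correctly read the comparator as the cost-minimizing fixed allocation, consistent with the paper's proof despite the $\max$ typo in its regret definition).
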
   
\begin{proof}
See Appendix \ref{apx:online-algorithm-utility}. 
\end{proof}


As a result of Theorem \ref{thm:online-algorithm-utility}, the utility company can eventually learn its optimal allocation rule in hindsight independent of the outcomes of the game playing among prosumers.

\section{Simulation Results}\label{sec:simulation}

In this section, we evaluate our theortical results using extensive simulations. To provide more concrete results, throughout this section we consider some specific forms for the demand and price functions. We let $D_i(t):=[\tau_i+L_i(t)-S_i(t)]^+$, where $\tau_i\in [0,S_i^{\max}]$ is an internal threshold constant which is only known to prosumer $i$. Note that using this demand function, the action for each prosumer $i$ reduces to only selecting its consumption level. The idea for such a choice of demand function is that we assume each prosumer has a certain threshold $\tau_i$ (e.g., amount of energy that a prosumer usually anticipates to consume in a normal day) such that it buys as much energy from the utility company which, together with the current stored energy $S_i(t)$, satisfies its current consumption $L_i(t)$, and saves an extra $\tau_i$ energy units in the storage for the next time step. Moreover, we adopt the \textit{fairness} pricing function for the utility company (see, e.g., \cite{wang2015load,mohsenian2010optimal}) which charges each prosumer proportional to its demand over the aggregate demand of all others and is given by $p_i(\boldsymbol{D}(t)):=\alpha \frac{D_i(t)}{\sum_{j=1}^{N}D_j(t)}$, where $\alpha$ is a constant which is set by the utility company.   



\begin{figure}[t!]
\vspace{-1.2cm}
\begin{center}
\hspace{0.5cm}\includegraphics[totalheight=.31\textheight,
width=.44\textwidth,viewport=100 0 1000 900]{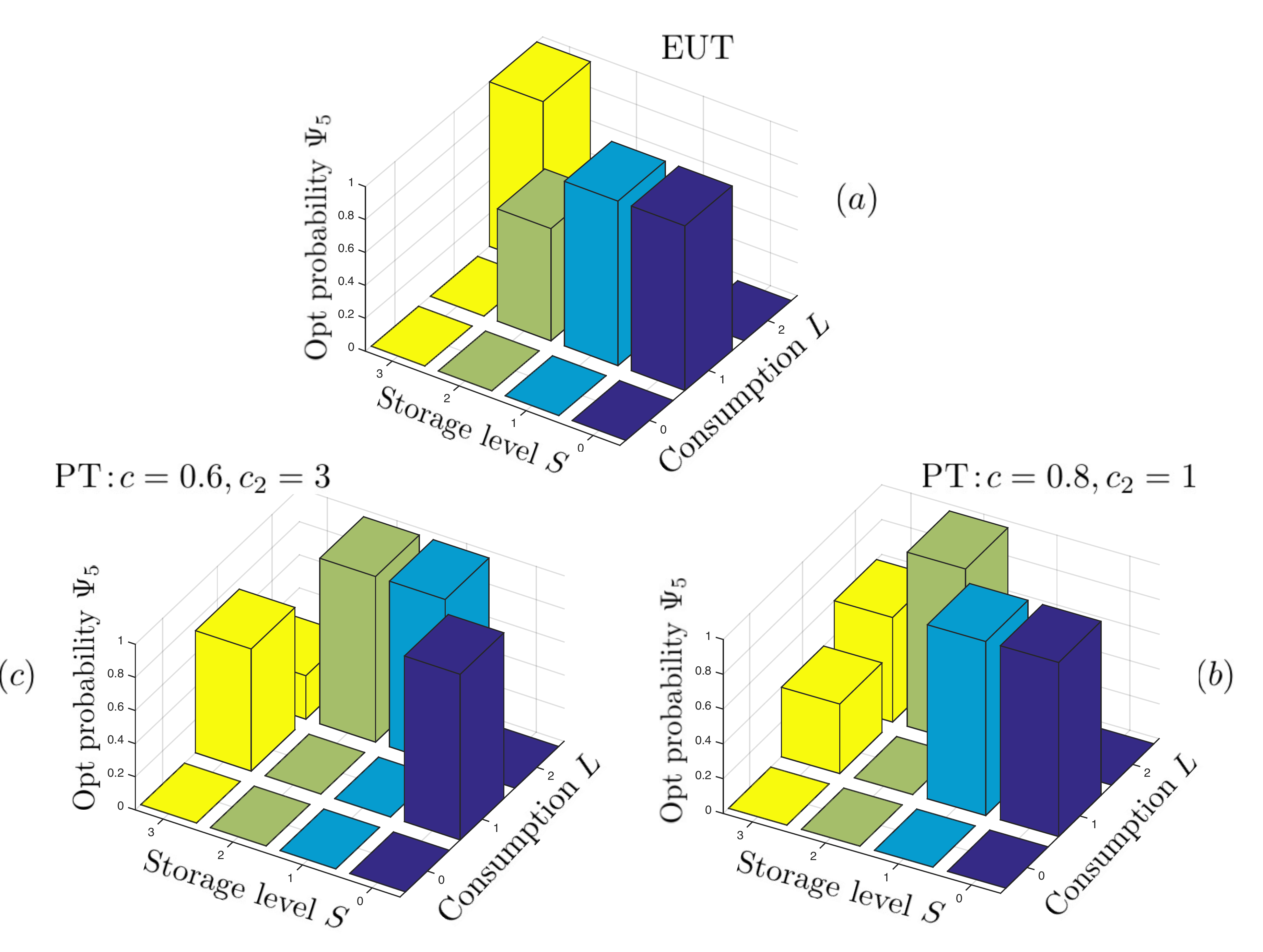} \hspace{0.4in}
\end{center}\vspace{-0.3cm}
\caption{Optimal policy for prosumer $5$ given that all other prosumers follow uniform policies. The top figure depicts this policy when prosumer $5$ maximizes its average expected payoff, while figures \ref{fig:BR-policy}-$(b)$ and \ref{fig:BR-policy}-$(c)$ illustrate its optimal policies under subjective evaluation with two different distortion parameters $c=0.8, c_2=1$ and $c=0.6, c_2=3$.}
\label{fig:BR-policy}
\end{figure}

\subsection{Deviation from Expected Utility}\label{sec:sim-best-response}
First, we will analyze how PT can affect the optimal strategic choices of the prosumers compared with EUT. We consider $N=10$ prosumers with independent Gaussian generated energy distribution with mean and variance vectors $\boldsymbol{\mu}=(0.5, 0.5, 1, 0, 1, 0.7, 0.4, 0.8, 0.3, 1)$ and $\boldsymbol{\sigma}^2=(2, 1, 1, 2, 2, 1, 1, 2, 1, 2)$, respectively. Moreover, we let the internal thresholds of the prosumers be given by a vector $\boldsymbol{\tau}=(1, 0, 1, 2, 1, 0, 0,1,1,2)$. We assume that the storage of each prosumer has four different energy levels in kWh such that $\mathcal{S}_i=\{0,1,2,3\}$, and the consumption level of each prosumer has three different levels $\mathcal{L}_i=\{0,1,2\}$; $0~\mbox{kWh}$ for no consumption, $1~\mbox{kWh}$ for medium consumption, and $2~\mbox{kWh}$ for high consumption. We set the unit price parameter set by the utility company to $\alpha=1$. Furthermore, we assume that the satisfaction function of all prosumers is an increasing concave function such that $f(x)=\log(1+x)$. First, we assume that all prosumers except prosumer $5$ follow a uniform stationary policy in which, independent of their storage levels, they consume $0$, $1$, or $2$ units of energy with equal probabilities of $\frac{1}{3}$ at each stage. We consider two different scenarios. In the first one, we simply assume that prosumer $5$ makes its decisions based on EUT, while in the second case we assume that it selects its policy using PT with weight and value functions given by \eqref{eq:weight-value} with parameters $c\in\{0.6,0.8\}$, $c_1=0.5, c_2\in\{1,3\}$, and $c_3=0.3$.


Figures \ref{fig:BR-policy} illustrates the optimal stationary policy of prosumer $5$ in each of the above scenarios. In this figure the length of the bar in $(s,l)$th coordinate indicates the optimal probability of consuming $l$ units of energy by prosumer $5$ when its stored energy level equals $s$. From Figure \ref{fig:BR-policy}, we can see that there is a considerable difference between optimal policy of prosumer $5$ under EUT (Figure \ref{fig:BR-policy}-$(a)$) and PT (Figure \ref{fig:BR-policy}-$(b),(c)$). In particular, when the parameters in the weight and value functions of PT changes toward more deviation from EUT (i.e., from Figure \ref{fig:BR-policy}-$(b)$ to \ref{fig:BR-policy}-$(c)$), the difference between the optimal policy under EUT and PT will be more substantial. For instance, when prosumer $5$ has higher distortion in its subjective payoff evaluation, it tends to consume more energy within lower ranges of its energy storage. On the other hand, it becomes more conservative for higher energy storage ranges as it becomes more sensitive to future uncertainties.\footnote{Note that the loss in high energy regimes is considerably more than that in low regimes which will be perceived even much more by that prosumer} As an implication, the utility company must allocate more energy units to the substations whose customers act more subjectively, especially when they are experiencing low amount of energy storage.  

\begin{figure}[t!]
\vspace{-2.5cm}
\begin{center}
\hspace{0.5cm}\includegraphics[totalheight=.4\textheight,
width=.45\textwidth,viewport=0 0 600 600]{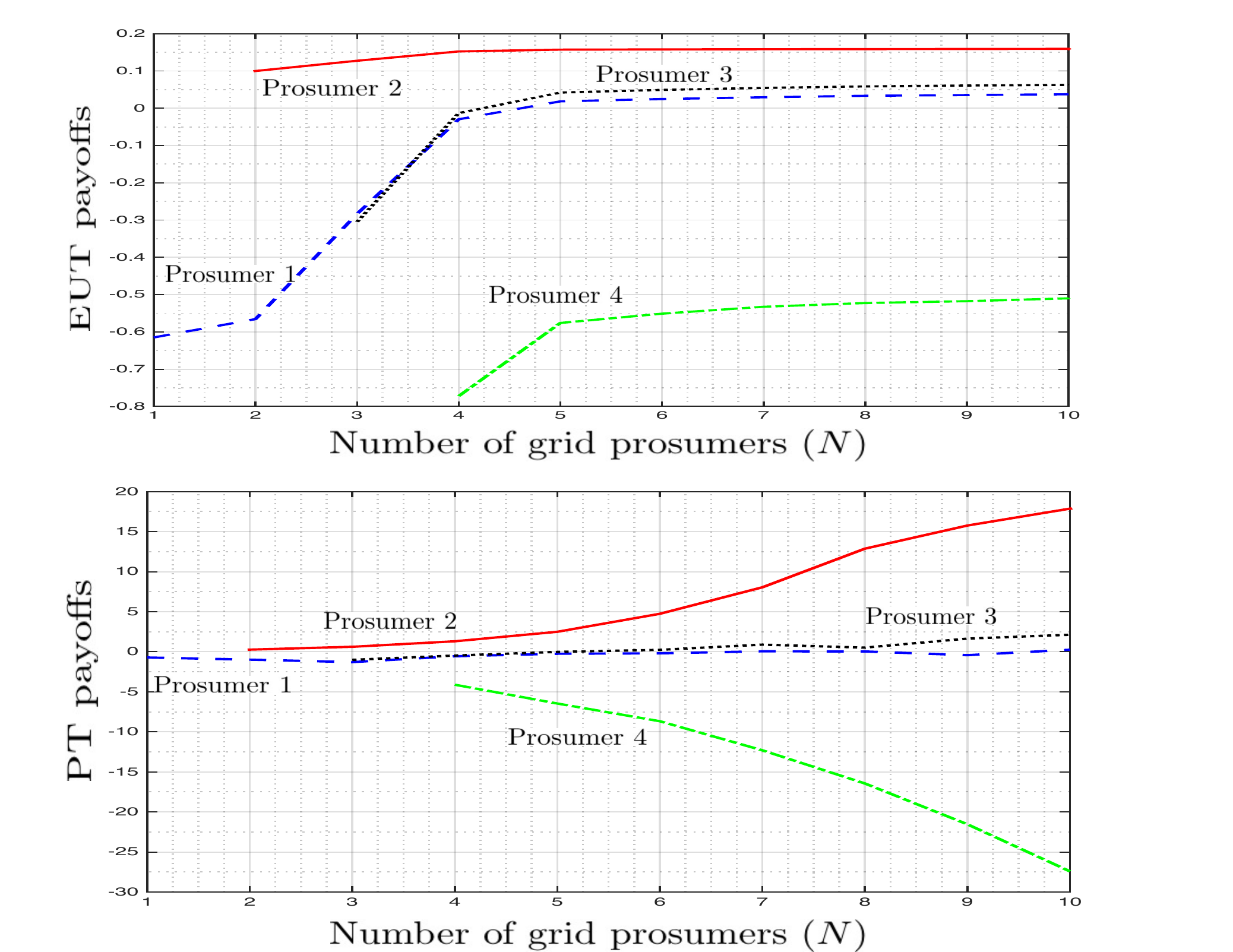} \hspace{0.4in}
\end{center}\vspace{-0.3cm}
\caption{Received payoffs by prosumers $1-4$ under two cases of EUT (left) and PT (right). It is assumed that all prosumers follow uniform policy and they join the grid one by one from prosumer $1$ to prosumer $10$.}
\label{fig:ETPTnumber}
\end{figure}

Figure \ref{fig:ETPTnumber} illustrates the received payoffs by prosumers $1$ to $4$ in terms of the number of players under two cases of PT and EUT. Here, it is assumed that all the players follow a uniform policy, and that they join energy trading sequentially starting from prosumer $1$ up to prosumer $10$. From Figure \ref{fig:ETPTnumber}, we can see that, the prosumers' payoffs under EUT increasingly converge to certain values as more prosumers join the grid. As an example, starting from $-0.6$, the first prosumers' payoff increasingly converges to about $0.05$ as more prosumers join the energy market. One possible reason for such outcome could be the mean field phenomenon which becomes more apparent as more prosumers join the grid. This is because we have assumed that the random generated process of prosumers are independent from each other, such that for large number of prosumers their aggregate behavior converge to certain distribution using central limits theorem. Since a prosumer's payoff strongly depends on the aggregate decisions (e.g., the choice of pricing function), it is reasonable to view a similar type of convergence in prosumers' payoffs. However, this property no longer holds in the case of PT such that prosumers' payoffs follow different pattern without converging to any specific value. This is because in the case of PT, uncertainties will increase as they now stem from both renewable energy and the behavior of the prosumers. As a result, any new prosumer who joins the grid will bring extra uncertainties to the existing ones which ties the prosumers' payoffs in a much more complicated manner.

\subsection{Achieving an $\epsilon$-NE Policy}\label{sim:sec-epsilon-NE}

Next, we illustrate the convergence of Algorithm \ref{alg:1} to an $\epsilon$-NE through a numerical example. Hereinafter, we consider only $3$ different prosumers, each having $3$ storage levels such that $\mathcal{S}=\{0,1,2\}$, and two consumption levels $\mathcal{L}=\{0,1\}$ (each prosumer either decides to consumes energy or not). All other parameters remain similar to Section \ref{sec:sim-best-response} but restricted only to the first three coordinates. For instance, the mean of normal generated energy distribution for these three prosumers are given by $0.5, 0.5$, and $1$, respectively. We set the length of sub-intervals in Algorithm \ref{alg:1} in which the prosumers do not switch their policies to $T(\epsilon)=30$. Finally, we evaluate the policy selection of prosumers under two different scenarios with EUT and PT. The prospect functions of all prosumers are chosen as in Section \ref{sec:sim-best-response} with $c=0.8$.

\begin{figure}[t!]
\vspace{-3cm}
\begin{center}
\hspace{0.5cm}\includegraphics[totalheight=.3\textheight,
width=.5\textwidth,viewport=0 0 700 700]{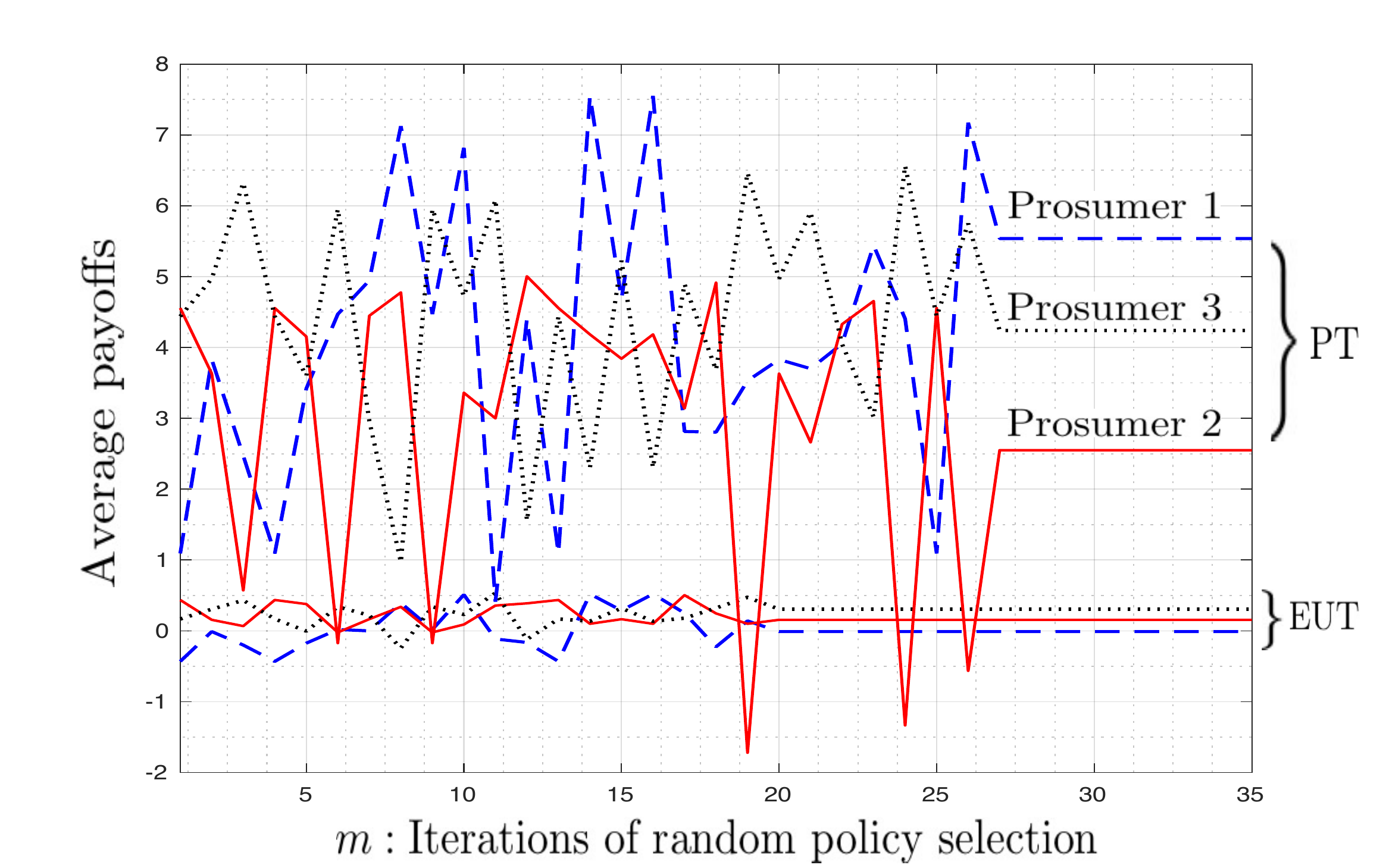} \hspace{0.4in}
\end{center}
\vspace{-0.75cm}
\caption{Illustration of received payoffs by prosumers at the beginning of each long interval in Algorithm \ref{alg:1}. As it can be seen, after $27$ iterations the prosumers with prospect payoffs reach to an $\epsilon$-NE and stick to their policies thereafter. Note that under EUT the prosumers reach to an $\epsilon$-NE even faster.}
\label{fig:epsilon-learning}
\end{figure}

Figure \ref{fig:epsilon-learning} illustrates the trajectories of prosumers' payoffs for different random selection of their policies during executions of Algorithm \ref{alg:1}. Here, the top three curves show the trajectory of prosumers under PT, while the bottom three show that under the EUT. Figure \ref{fig:epsilon-learning} shows that, after each prosumer switches its policy $27$ times, the profile of policies converge to an $\epsilon$-NE, under PT with $\epsilon=0.01$. Then, prosumers will abide by their equilibrium policies and do not change them anymore. Interestingly, one can see that the order of received payoffs by prosumers in the equilibrium under PT is different from that under the EUT. Here, we mention two possible reasons for such a phenomenon: i) The nonlinear relationship between the equilibrium points under PT and EUT due to the weighting and framing effects, and ii) The existence of multiple equilibrium points with different order of payoffs in either EUT or PT. In particular, under the above settings, one can see that the policies converge faster in the EUT case, compared to PT.      


Figure \ref{fig:NE-policies} illustrates the $\epsilon$-NE policies which are achieved at the end of Algorithm \ref{alg:1}. Here, for larger values of storage, the prosumers decide to consume energy more confidently, while for smaller storage values depending on their internal parameters they become more risk averse and randomize their strategies between using and not using energy. An interesting and somehow counter-intuitive behavior in the PT case can be seen in the behavior of prosumer 2 in Figure \ref{fig:NE-policies}. This prosumer has internal parameter $\tau_2=0$, in which case when its storage level is $1$, the expected utility suggests that this prosumer must consume $1$ unit of energy to improve its instantaneous payoff. However, Figure \ref{fig:NE-policies} shows that prosumer $2$ does not consume energy when $S=1$. This is because consuming energy at the current stage will bring the storage level of prosumer $2$ close to zero with high probability, in which case, very likely, this prosumer will suffer lack of energy in the next time step. Since the prosumer tends to overestimate its loss, it decides to postpone its energy usage to the future when it has enough energy in its storage.

\begin{figure}[t!]
\vspace{-3.2cm}
\begin{center}
\hspace{0.5cm}\includegraphics[totalheight=.25\textheight,
width=.4\textwidth,viewport=150 0 1050 900]{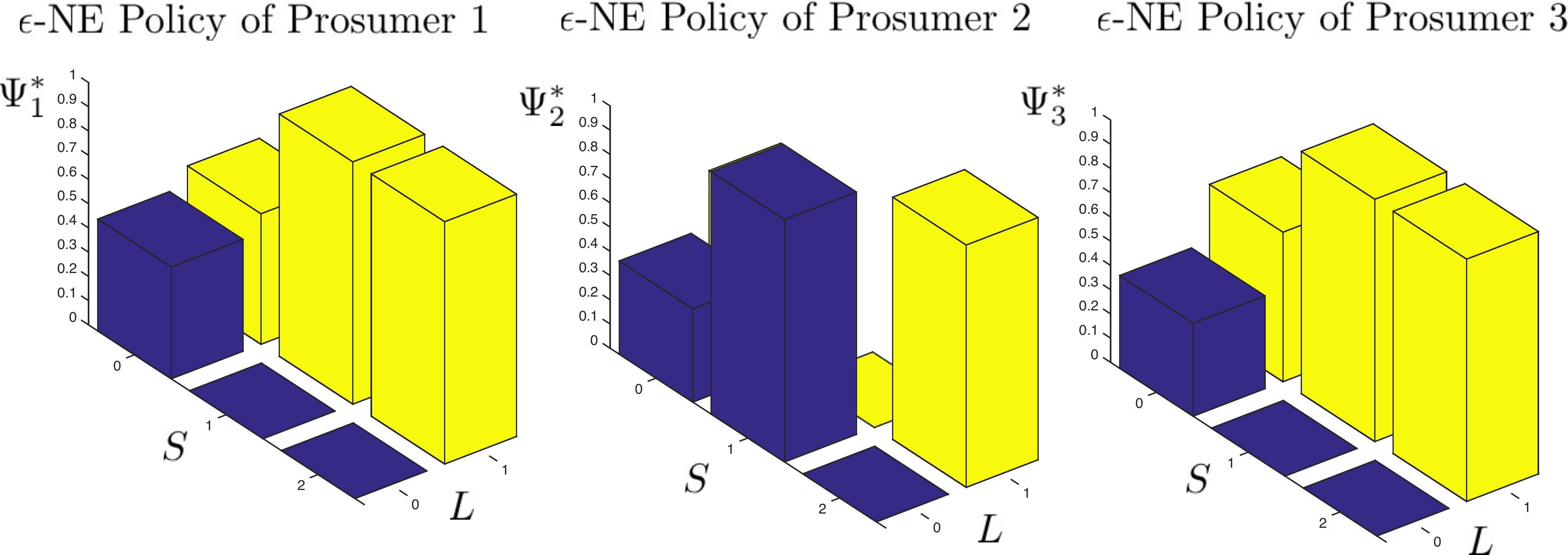} \hspace{0.4in}
\end{center}\vspace{-0.35cm}
\caption{$\epsilon$-NE policies for each of the prosumers with prospect payoffs at the end of Algorithm \ref{alg:1}.}
\label{fig:NE-policies}
\end{figure}

\subsection{Energy Allocation Without Regret}


Finally, we illustrate the performance of the utility allocation Algorithm \ref{alg:2} for the same set of parameters as in Section \ref{sim:sec-epsilon-NE}. Here, we consider three substations (one for each prosumer). Here, we have set the cost of unit energy production and reproduction for the utility company in \eqref{eq:company-cost} to $\beta=1$ and $\gamma=3$, respectively. As it can be seen in Figure \ref{fig:regret_minimization}, as the number of interactions between the utility company and the prosumers increases, the utility company is able to reduce its average regret by properly allocating energy units to the prosumers. In particular, once the demand market has been stabilized at an $\epsilon$-NE, the utility company's regret approaches zero more smoothly because some uncertainty due to prosumers' policy switchings has been ceased. This in turn makes the energy demand market more predictable for the utility company.


\begin{figure}[t!]
\vspace{-5cm}
\begin{center}
\includegraphics[totalheight=.36\textheight,
width=.4\textwidth,viewport= 100 0 1000 900]{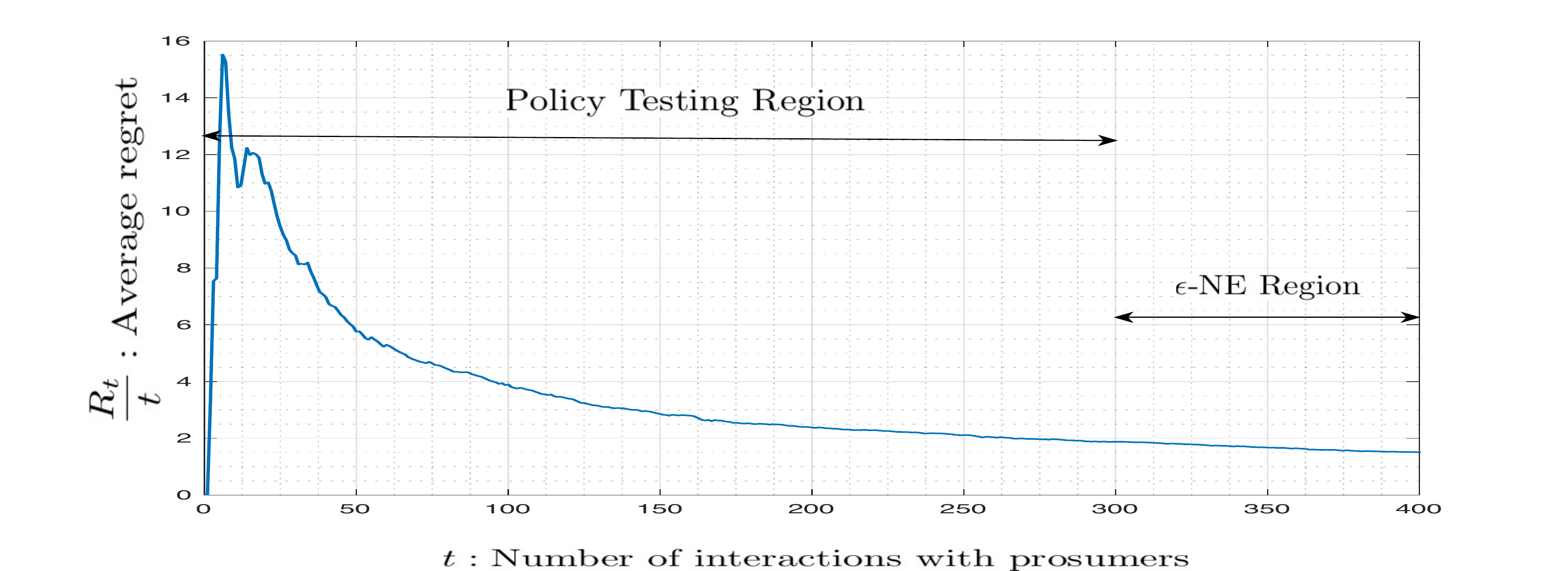} \hspace{0.4in}
\end{center}\vspace{-0.3cm}
\caption{This figure shows the vanishing average regret of the utility company as a result of its interactions with prosumers.}
\label{fig:regret_minimization}
\vspace{0.7cm}
\begin{center}\vspace{0.7cm}
\includegraphics[totalheight=.33\textheight,
width=.4\textwidth,viewport= 30 0 330 300]{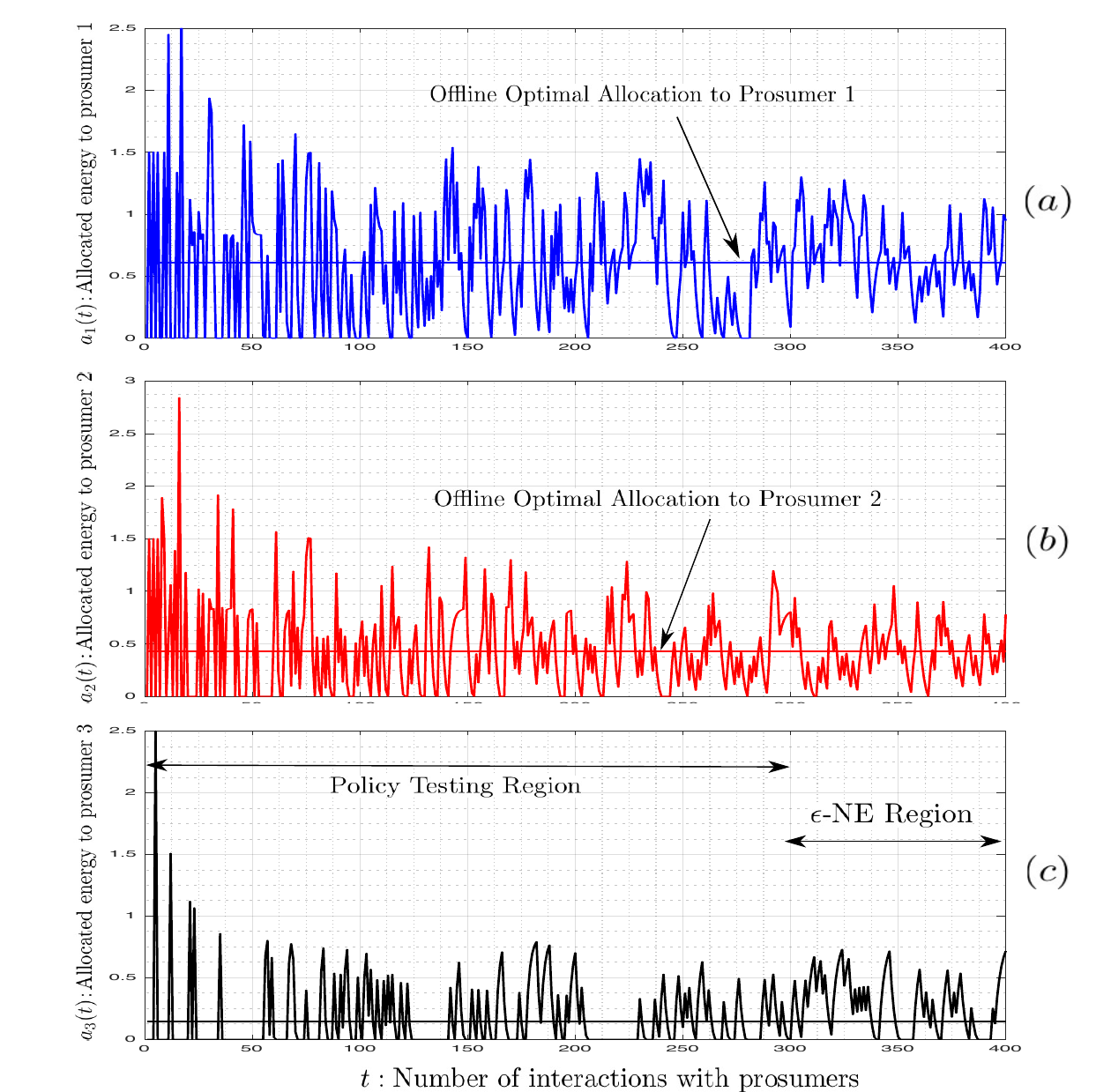} \hspace{0.4in}
\end{center}\vspace{-0.3cm}
\caption{This figure shows the amounts of online energy distribution by the utility company to each of the prosumers and their fluctuations around their optimal offline in hindsight. One can see that prosumers with less reliable renewables have higher and more unpredictable energy demands.}
\label{fig:substations}
\end{figure}

Finally, Figure \ref{fig:substations} represents the trajectories of online energy allocation by the utility company to each of the three prosumers which oscillate around their optimal offline energy allocation when the demand market is known. Moreover, the variance of fluctuations decreases as the number of interactions increases, and when the market has been stabilized in an $\epsilon$-NE. In particular, one can see that the lowest and highest amount and variance of allocated energy belong to prosumers $3$ and $1$, respectively (Figures \ref{fig:substations}-$(c)$ and \ref{fig:substations}-$(a)$, respectively). This is due to the fact that prosumer $3$ benefits more than everyone by relying on its own renewable resources (with higher expected energy generation $\mu_3=1$, and smaller variance $\sigma^2_3=1$). On the other hand, prosumer $1$ has the least reliable source of renewables (with the smallest mean and highest variance), which in turn increases its energy demand from the utility company with higher variance.

\section{Conclusion}\label{sec:conclusion}

In this paper, we have introduced a new model for energy trading in smart grids using a stochastic game framework in the demand side and an optimization problem for the utility side. We have incorporated storage devices into our formulation in order to improve energy management and have captured uncertainty of renewable resources into our grid design. Furthermore, we have formulated the prosumers' payoffs using prospect theory in order to study the real-life behavioral decisions of the prosumers. Then, we have shown that such a stochastic game admits an NE among stationary policies. In particular, we have developed a novel distributed algorithm which guarantees almost sure convergence of the prosumers' policies to an $\epsilon$-NE with very limited information sharing among them. We have also provided an online algorithm with vanishing average regret for the utility company which learns the optimal energy allocation rules over time, and have justified our results through extensive simulations.

As a future direction of research, one can consider a generalization of the stochastic game among prosumers in which there is a strong correlation among prosumers' generating energy resources. Also, in our optimization problem for the utility company, we have used regret as a performance measure where the utility company's goal is to compete with its best fixed energy allocation in hindsight. This performance measure is specially suitable in our setting since the prosumers will eventually converge to their stationary NE policies (hence in the long term what really matters for the utility company is to learn its fixed optimal allocation in the equilibrium). However, an interesting question here is to see whether there is an online allocation strategy for the utility company which achieves the same average cost as its best \emph{dynamic} allocation strategy in hindsight. 

\bibliographystyle{IEEEtran}
\bibliography{thesisrefs}

\appendices
\section{}\label{ap-static-preliminary-proofs}

\subsection{Proof of Lemma \ref{lemm:stationary}}\label{apx:lemma-stationary}
For any two states $s',s\in \{0,1,,\ldots,S_i^{\max}\}$, and $a:=(l,d)\in \{0,\ldots,L_i^{\max}\}\times \{0,\ldots,D_i^{\max}\}$, let $W^{(i)}_{s'as}$ denote the probability that the state of prosumer $i$ will change from $S_i(t)=s$ to $S_i(t+1)=s'$, by taking action $A_i(t)=a$. We note that $0<s+d-l\leq S_i^{\max}$. This is because $s+d-l$ is the amount of energy units which is left in the storage at the beginning of period $[t, t+1)$ which is when prosumer $i$ takes its action.\footnote{Note that the generated energy $G_i(t)$ will be realized and available for use only at the end of period $[t, t+1)$ or at the beginning of time $t+1$.} Since the storage capacity ranges from $0$ to $S_i^{\max}$, we have $0\leq s+d-l\leq S_i^{\max}$, and in particular, $|s'-s-d+l|\leq S_i^{\max}$. Moreover, since the event $\{G_i=s'-s-d+l\}$ is a subset of the event $\big\{\min\{[G_i+s+d-l]^+,S^{\max}_{i}\}=s'\big\}$, we can write 
\begin{align}\nonumber
W^{(i)}_{s'as}&=\mathbb{P}\{S_i(t+1)=s'|S_i(t)=s,A_i(t)=a\}\cr 
&=\mathbb{P}\Big\{\min\big\{[G_i+s+d-l]^+,S^{\max}_{i}\big\}=s'\Big\}\cr 
&\ge \mathbb{P}\{G_i=s'-s-d+l\}\cr 
&\ge \min_{|k|\leq S_i^{\max}}\mathbb{P}\{G_i=k\}= \lambda_i>0,
\end{align}
where the last inequality is due to Assumption 1. Now given a stationary policy $\Psi_i$ followed by prosumer $i$, and any two states $s,s'\in \mathcal{S}_i$, the probability that the state of prosumer $i$ is at $S_i(t+1)=s'$, given that it was at $S_i(t)=s$ before, is given by $Q_{s',s}^{\Psi_i}:=\sum_{a\in \mathcal{A}_i}\Psi_i(a|s)W^{(i)}_{s'as}\ge\sum_{a\in \mathcal{A}_i}\Psi_i(a|s)\lambda_i=\lambda_i$, where we note that $\Psi_i(a|s)$ is independent of time due to the fact that $\Psi_i$ is a stationary policy. Let $Q^{\Psi_i}$ be a $S^{\max}_i\times S_i^{\max}$ transition probability matrix with entries equal to $Q^{\Psi_i}_{s',s}$. Then the probability of being at different states at time $t$ starting from some initial state is given by a time homogeneous Markov chain with transition matrix $Q^{\Psi_i}$. Since the entries of $Q^{\Psi_i}$ are bounded below by $\lambda_i>0$, using the fundamental theorem of Markov chains there exists a unique stationary distribution $\pi^{\Psi_i}$ such that $\|(Q^{\Psi_i})^t-\boldsymbol{1}(\pi^{\Psi_i})'\|_2\leq (1-\lambda_i)^t$. This implies the result.  

\subsection{Proof of Theorem \ref{thm:existence}}\label{apx:existence}
Consider a \textit{virtual} $N$-player game in normal form where player $i$'s action set equals to the set of feasible occupation measures for prosumer $i$ in the original stochastic game:

\vspace{-0.3cm}
\begin{small}
\begin{align}\nonumber
\mathcal{M}_i\!:=\!\Big\{&\rho_i\ge 0: \!\!\sum_{(s_i,a_i)}\!\!\rho_i(s_i,a_i)(\mathbb{I}_{\{x=s_i\}}\!-\! W^i_{xa_is_i})\!=0, \forall x\in \mathcal{S}_i\Big\}.
\end{align}\end{small}In particular, an action for player $i$ in the virtual games means choosing an occupation measure for prosumer $i$. Moreover, for any joint action profile $(\rho_1,\ldots,\rho_N)$ chosen by players in the virtual game, we define the payoff for player $i$ by $J_i(\rho_i,\boldsymbol{\rho}_{-i}):=V_i(\Psi_i,\bold{\Psi}_{-i})=\sum_{(s_i,a_i)}\rho_i(s_i,a_i)K_i(s_i,a_i)$, where $\Psi_i, i=1,\ldots,n$ is given by \eqref{eq:occupation-def}. Here we note that policies $\Psi_i, i=1,\ldots,n$ are well-defined except possibly on the states where the denominator becomes zero. However, in the definition of \eqref{eq:occupation-to-policy}, if the denominator becomes zero, it means that at some state $s_i$ we have $\rho_i(s_i,a)=0, \forall a\in \mathcal{A}_i$. This in view of the definition of occupation measure \eqref{eq:occupation-def} implies that any policy which induces such occupation measure does not put any probability mass on the state $s_i$. Hence, at state $s_i$ one can define $\Psi_i(\cdot|s_i)$ to be any probability distribution over the action set $\mathcal{A}_i$, without actually changing the payoffs received by the prosumers. As a result, any two policy which are defined using the same occupation measure by \eqref{eq:occupation-def} will result in the same payoff for prosumer $i$, i.e., to the same value $V_i(\Psi_i,\bold{\Psi}_{-i})$. Hence, $J_i(\rho_i,\boldsymbol{\rho}_{-i})$ is a well-defined function.

Next, we note that if there exists a collection of $N$ occupation measures $\rho^*_i\in \mathcal{M}_i$ such that $(\rho^*_1,\ldots,\rho^*_N)$ forms a pure-strategy NE of the virtual game, then the corresponding induced stationary policies given by $\Psi^*_i(a_i|s_i)=\frac{\rho^*_i(s_i,a_i)}{\sum_{a_j\in \mathcal{A}_i}\rho^*_i(s_i,a_j)}$ will be a stationary NE policy for the original stochastic game. To see this more clearly, let $(\rho^*_i,\boldsymbol{\rho}^*_{-i})$ be a pure-strategy NE of the virtual game. This means that 
\begin{align}\nonumber
&V_i(\Psi^*_i,\boldsymbol{\Psi}^*_{-i})=J_i(\rho^*_i,\boldsymbol{\rho}^*_{-i})=\max_{\rho_i\in \mathcal{M}_i}J_i(\rho_i,\boldsymbol{\rho}^*_{-i})\cr 
&\qquad= \max_{\rho_i\in \mathcal{M}_i}\sum_{(s_i,a_i)}\rho_i(s_i,a_i)K^*_i(s_i,a_i)=\max_{\theta_i}V_i(\theta_i,\boldsymbol{\Psi}^*_{-i}),
\end{align} 
where $K^*_i(s_i,a_i)\!:=\!\!\!\!\!\sum\limits_{({\rm \boldsymbol{s}}_{-i},{\rm \boldsymbol{a}}_{-i})}\!\!\!\!\!\!w_i\big(\!\!\prod\limits_{j\neq i}\Psi^*_j(a_j|s_j)\pi^{\Psi^*_j}(s_j)\big)v_i\big(U_i(\boldsymbol{a})\big)$, $\theta_i$ is an arbitrary policy for prosumer $i$, and the last equality holds because $\max_{\rho_i\in \mathcal{M}_i}\sum_{(s_i,a_i)}\rho_i(s_i,a_i)K_i(s_i,a_i)$ is exactly the linear program \eqref{eq:LP} describing the optimal policy (best response) of prosumer $i$ for given choice of stationary policies $\bold{\Psi}^*_{-i}$ of other prosumers.

Therefore, we are only left to show that the virtual game admits a pure-strategy NE. For this purpose, we benefit from the following Lemma from \cite{rosen1965existence}:   

\begin{lemma}\cite[Theorem 1]{rosen1965existence}\label{lemm:contractable}
\normalfont Consider a \emph{concave} game in normal form where each player $i$, chooses an action $m_i\in \mathcal{M}_i$, where $\mathcal{M}_i$ is a closed convex bounded set, and obtains a payoff $J_i(m_i,{\rm \boldsymbol{m}}_{-i})$. Assume that $J_i(m_i,{\rm \boldsymbol{m}}_{-i})$ is a concave function of $m_i$ for every arbitrary but fixed ${\rm \boldsymbol{m}}_{-i}$, and continuous function of ${\rm \boldsymbol{m}}, \forall {\rm \boldsymbol{m}}\in \mathcal{M}_1\times\ldots\times\mathcal{M}_N$. Then the game admits a pure-strategy NE. 
\end{lemma}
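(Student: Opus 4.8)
The plan is to prove this by exhibiting a fixed point of the joint best-response correspondence via Kakutani's fixed-point theorem. First I would observe that, since each action set $\mathcal{M}_i$ lives in a finite-dimensional Euclidean space and is closed and bounded, it is in fact compact as well as convex; consequently the product $\mathcal{M}:=\mathcal{M}_1\times\cdots\times\mathcal{M}_N$ is a nonempty compact convex subset of a finite-dimensional space. The key device is to bundle the $N$ separate maximizations into a single correspondence. For $\boldsymbol{m},\boldsymbol{y}\in\mathcal{M}$ I would define the aggregate response function $\Phi(\boldsymbol{m},\boldsymbol{y}):=\sum_{i=1}^{N}J_i(y_i,\boldsymbol{m}_{-i})$ and the correspondence $\Gamma(\boldsymbol{m}):=\arg\max_{\boldsymbol{y}\in\mathcal{M}}\Phi(\boldsymbol{m},\boldsymbol{y})$. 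Because each summand $J_i(y_i,\boldsymbol{m}_{-i})$ depends on $\boldsymbol{y}$ only through the block $y_i$, maximizing the sum over the product set $\mathcal{M}$ decouples into maximizing each summand separately, so that $\Gamma(\boldsymbol{m})=\prod_{i=1}^{N}\arg\max_{y_i\in\mathcal{M}_i}J_i(y_i,\boldsymbol{m}_{-i})$. A fixed point $\boldsymbol{m}^*\in\Gamma(\boldsymbol{m}^*)$ is therefore exactly a profile at which each player's action is a best response to the others, i.e., a pure-strategy NE.

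Next I would verify the hypotheses of Kakutani's theorem for the correspondence $\Gamma$ from $\mathcal{M}$ to subsets of $\mathcal{M}$. Nonemptiness and compactness of the values follow from the Weierstrass extreme-value theorem, since $\boldsymbol{y}\mapsto\Phi(\boldsymbol{m},\boldsymbol{y})$ is continuous (each $J_i$ is continuous in $\boldsymbol{m}$) and $\mathcal{M}$ is compact. Convexity of the values follows from the concavity assumption: each $J_i(\cdot,\boldsymbol{m}_{-i})$ is concave in its own argument, and a finite sum of functions that are concave in distinct, separate blocks of coordinates is concave in $\boldsymbol{y}$ jointly, so the maximizing set is convex. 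The remaining and most delicate property is that $\Gamma$ has a closed graph, equivalently that it is upper hemicontinuous; here I would invoke Berge's maximum theorem, observing that $\Phi(\boldsymbol{m},\boldsymbol{y})$ is jointly continuous in $(\boldsymbol{m},\boldsymbol{y})$ and that the constraint set $\mathcal{M}$ does not vary with $\boldsymbol{m}$, hence is a trivially continuous constant constraint correspondence.

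With these properties in hand, Kakutani's fixed-point theorem applies to the correspondence $\Gamma$ from the nonempty compact convex set $\mathcal{M}$ into itself with nonempty convex compact values and closed graph, yielding a point $\boldsymbol{m}^*$ with $\boldsymbol{m}^*\in\Gamma(\boldsymbol{m}^*)$. Unpacking the decoupled form of $\Gamma$, this means $J_i(m_i^*,\boldsymbol{m}_{-i}^*)\ge J_i(m_i,\boldsymbol{m}_{-i}^*)$ for every $m_i\in\mathcal{M}_i$ and every $i$, which is precisely the statement that $\boldsymbol{m}^*$ is a pure-strategy Nash equilibrium. I expect the main obstacle to be the upper-hemicontinuity step: establishing the closed graph of the $\arg\max$ correspondence rigorously requires the \emph{joint} (not merely separate) continuity of $\Phi$ together with the Berge machinery, whereas nonemptiness and convexity of the values are comparatively routine consequences of compactness and concavity. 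A secondary point to handle carefully is the reduction itself, namely justifying that maximizing the aggregate $\Phi$ over the product set is equivalent to the $N$ independent player-wise maximizations, which rests essentially on the separable, block-diagonal dependence of $\Phi$ on $\boldsymbol{y}$.
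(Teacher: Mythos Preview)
Your proposal is correct and follows the classical Kakutani/Berge argument. The paper does not supply its own proof of this lemma at all; it merely quotes it as \cite[Theorem 1]{rosen1965existence}, so there is nothing to compare beyond noting that your aggregate-response construction and fixed-point argument are precisely the route taken in Rosen's original proof.
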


To complete the proof, we show that the virtual game is indeed a so called \textit{concave} game which satisfies the conditions of Lemma \ref{lemm:contractable}. Clearly the action sets $\mathcal{M}_i$ are closed convex polyhedron which are determined by a set of linear constraints. Moreover, for any arbitrary but fixed $\boldsymbol{\rho}_{-i}\in \mathcal{M}_{-i}$, the payoff of player $i$ is given by $J_i(\rho_i,\boldsymbol{\rho}_{-i})=\sum_{(s_i,a_i)}\rho(s_i,a_i)K_i(s_i,a_i)$, which is a linear function of $\rho_i$, and hence a concave function of $\rho_i$. Moreover, using Lemma \ref{lemm:continuity-stationary} and by continuity of $w_i(\cdot)$, the term $K_i(s_i,a_i)=\sum_{({\rm \bold{s}}_{-i},{\rm \bold{a}}_{-i})}w_i\Big(\prod_{j\neq i}\Psi_j(a_j|s_j)\pi^{\Psi_j}(s_j)\Big)v_i\Big(U_i(\bold{a})\Big)$ is a continuous function of $\bold{\Psi}_{-i}$. Since $\bold{\Psi}_{-i}$ itself is a continuous function of $\boldsymbol{\rho}_{-i}$ (recall that for all $k$, $\Psi_k(a_k|s_k)=\frac{\rho_k(s_k,a_k)}{\sum_{a_j\in \mathcal{A}_k}\rho_k(s_k,a_j)}$ which is a continuous function of $\rho_k$), therefore $J_i(\rho):=\sum_{(s_i,a_i)}\rho(s_i,a_i)K_i(s_i,a_i)$ is a continuous function of $\rho=(\rho_i,\boldsymbol{\rho}_{-i})$. Appealing to the result of Lemma \ref{lemm:contractable} one can see that the virtual game admits a pure-strategy NE. This completes the proof.

\subsection{Proof of Lemma \ref{lemm:epsilon-value-error}}\label{apx:epsilon-value-error}
First we note that for a joint stationary policy $\bold{\Psi}$, we have 

\vspace{-0.3cm}
\begin{small}
\begin{align}\nonumber
V_i(\bold{\Psi})\!=\!\sum_{({\rm \bold{s}},{\rm \bold{a}})}\pi^{\Psi_i}(s_i)\Psi_i(a_i|s_i)w_i\big(\prod\limits_{j\neq i}\Psi_j(a_j|s_j)\pi^{\Psi_j}(s_j)\big)v_i\big(U_i(\bold{a})\big).
\end{align}\end{small}Since $w_i(\cdot)$ is continuous over the compact interval $[0,1]$, it is uniformly continuous. Thus for any $\epsilon>0$, there exists a positive constant $\delta_i$ such that if $|x-y|<\delta_i$, then $|w_i(x)-w_i(y)|<\epsilon$. Let $\delta=\min\delta_i$, $\epsilon_1=\min\{\epsilon,\frac{\delta}{N^2}\}$, and $T=\frac{\ln(\epsilon_1)}{\ln(1-\lambda)}$, where $\lambda=\min\lambda_j$. Using Lemma \ref{lemm:stationary}, if $t>T$, we have $\max_{s_{j}\in \mathcal{S}_{j}}\left|\mathbb{P}^{\Psi_j}\{S_j(t)=s_j\}-\pi^{\Psi_j}(s_j)\right|<\epsilon_1$ for all prosumers $j$. For simplicity of notation let us define $\mathbb{P}_t^{\Psi_j}(s_j,a_j):=\mathbb{P}^{\Psi_j}\{S_j(t)=s_j,A_j(t)=a_j\}$. We can write
\begin{align}\nonumber
|\mathbb{P}_t^{\Psi_j}(s_j,a_j)&-\pi^{\Psi_j}(s_j)\Psi_j(a_j|s_j)|\cr 
&=\Psi_j(a_j|s_j)\left|\mathbb{P}^{\Psi_j}\{S_j(t)=s_j\}-\pi^{\Psi_j}(s_j)\right|\cr 
&<\Psi_j(a_j|s_j)\epsilon_1, \ \ \forall j. 
\end{align}Using this relation, we can write  

\vspace{-0.3cm}
\begin{small}
\begin{align}
&\Big|\prod_{j\neq i}\mathbb{P}_t^{\Psi_j}(s_j,a_j)-\prod_{j\neq i}\pi^{\Psi_j}(s_j)\Psi_j(a_j|s_j)\Big|\cr
&\!\leq\! \Big|\!\prod_{j\neq i}\!\left(\!\pi^{\Psi_j}(s_j)\Psi_j(a_j|s_j)\!+\!\Psi_j(a_j|s_j)\epsilon_1\!\right)\!-\!\prod_{j\neq i}\!\pi^{\Psi_j}(s_j)\Psi_j(a_j|s_j)\Big|\cr 
&\!=\!\left(\!\prod_{j\neq i}\Psi_j(a_j|s_j)\!\right)\!\Big|\prod_{j\neq i}\left(\pi^{\Psi_j}(s_j)\!+\!\epsilon_1\right)\!-\!\prod_{j\neq i}\pi^{\Psi_j}(s_j)\Big|\cr
&\!\leq\!\Big|\prod_{j\neq i}\!\left(\!\pi^{\Psi_j}(s_j)\!+\!\epsilon_1\!\right)\!-\!\prod_{j\neq i}\!\pi^{\Psi_j}(s_j)\Big|\!=\!\epsilon_1 \frac{d}{dx}\!\prod_{j\neq i}\!\left(\!\pi^{\Psi_j}(s_j)\!+\!x\!\right)_{\!|_{x\in[0,\epsilon_1]}}\cr 
&\leq (N-1)\epsilon_1 (1+\epsilon_1)^{N-2}<N\times \frac{\delta}{N^2}\times (1+\frac{\delta}{N^2})^N<\delta, 
\end{align}\end{small}where the second inequality is because $\Psi_j(a_j|s_j)\in [0,1], \forall j$, and the last equality is by mean value theorem. Now let $\Delta_1$ be a uniform upper bound on the value function of the instantaneous payoffs, i.e., $\Delta_1=\max_{i,{\bold{a}}}v_i\big(U_i(\bold{a})\big)$, and $\Delta_2$ be a uniform upper bound on the weight functions, i.e., $\Delta_2=\max_{i, x\in[0,1]}w_i(x)$. Denoting the set of all states and actions by $\mathcal{S}$ and $\mathcal{A}$, respectively, we have 

\vspace{-0.3cm}
\begin{footnotesize}
\begin{align}\nonumber
&|V^{T(\epsilon)}_i(\boldsymbol{\Psi})-V_i(\boldsymbol{\Psi})|=\cr 
&=\frac{1}{T(\epsilon)}\Big|\sum_{({\rm \bold{s}},{\rm \bold{a}})}\sum_{t=1}^{T(\epsilon)}\Big[\mathbb{P}_t^{\Psi_i}(s_i,a_i)w_i(\prod_{j\neq i}\mathbb{P}_t^{\Psi_{j}}(s_j,a_j))\cr 
&\qquad\qquad-\pi^{\Psi_i}(s_i)\Psi_i(a_i|s_i)w_i\big(\!\prod\limits_{j\neq i}\Psi_j(a_j|s_j)\pi^{\Psi_j}(s_j)\big)\Big]\!v_i\big(U_i(\bold{a})\big)\Big|\cr 
&\leq\frac{\Delta_1}{T(\epsilon)}\sum_{({\rm \bold{s}},{\rm \bold{a}})}\sum_{t=1}^{T}\Delta_2\!+\!\frac{\Delta_1}{T(\epsilon)}\Big|\!\sum_{({\rm \bold{s}},{\rm \bold{a}})}\!\sum_{t=T}^{T(\epsilon)}\Big[\mathbb{P}_t^{\Psi_i}(s_i,a_i)w_i(\prod_{j\neq i}\mathbb{P}_t^{\Psi_{j}}(s_j,a_j))\cr 
&\qquad\qquad\qquad\qquad\qquad-\pi^{\Psi_i}(s_i)\Psi_i(a_i|s_i)w_i\big(\!\prod\limits_{j\neq i}\!\Psi_j(a_j|s_j)\pi^{\Psi_j}(s_j)\big)\Big]\Big|\cr
&\leq\frac{\Delta_1}{T(\epsilon)}\!\!\sum_{({\rm \bold{s}},{\rm \bold{a}})}\!\sum_{t=1}^{T}\Delta_2\!+\!\frac{\Delta_1}{T(\epsilon)}\!\!\sum_{({\rm \bold{s}},{\rm \bold{a}})}\!\sum_{t=T}^{T(\epsilon)}\Big|\mathbb{P}_t^{\Psi_i}(s_i,a_i)\!-\!\pi^{\Psi_i}(s_i)\Psi_i(a_i|s_i)\Big|\Delta_2\cr
&+\frac{\Delta_1}{T(\epsilon)}\sum_{({\rm \bold{s}},{\rm \bold{a}})}\sum_{t=T}^{T(\epsilon)}\Big|w_i(\prod_{j\neq i}\mathbb{P}_t^{\Psi_{j}}(s_j,a_j))\!-\!w_i\big(\prod\limits_{j\neq i}\Psi_j(a_j|s_j)\pi^{\Psi_j}(s_j)\big)\Big|\cr
&\leq \frac{T\Delta_1\Delta_2|\mathcal{S}||\mathcal{A}|}{T(\epsilon)}\!+\!\frac{(T(\epsilon)\!-\!T)\Delta_1\Delta_2|\mathcal{S}||\mathcal{A}|\epsilon}{T(\epsilon)}\!+\!\frac{(T(\epsilon)\!-\!T)\Delta_1|\mathcal{S}||\mathcal{A}|\epsilon}{T(\epsilon)},
\end{align}\end{footnotesize}where the first inequality is obtained by slitting the sum over two intervals $[1,T]\cup [T,T(\epsilon)]$, and upper bounding each of them in terms of $\Delta_1$ and $\Delta_2$. The second inequality hods by rearranging the terms and using triangle inequality, and the last inequality is valid by Lemma \ref{lemm:stationary} and definitions of $\Delta_1$ and $\Delta_2$. Finally, by choosing $T(\epsilon)\ge \frac{3T\Delta_1\Delta_2|\mathcal{S}||\mathcal{A}|}{\epsilon}$ (recall that $T=\frac{\ln(\min\{\epsilon,\frac{\delta}{N^2}\})}{\ln(1-\lambda)}$) the right hand side of the above relation will be at most $\epsilon$, which completes the proof.

\subsection{Proof of Theorem \ref{thm-alg-convergence}}\label{apx:alg-convergence}
First we argue that if $q_i^{m}=1, \forall i$, then the policy which is followed last by the prosumers, i.e., $\bold{\Psi}(m-1)$ must constitute an $\epsilon$-NE, which by the rule of the algorithm it must be played forever. Let us consider an arbitrary but fixed $i$. When $q_i^{m}=1$, this means that $\hat{V}_i^{(m)}>\max\limits_{k=1,\ldots,r}\hat{V}_{i_k}^{(m)}-\epsilon$. Since in the sub-intervals $Z_{(i-1)r+1},\ldots,Z_{ir}$, only the $i$th prosumer is sampling the policies corresponding to its vertex points, we have $\bold{\Psi}_{-i}(Z_{(i-1)r+k})=\bold{\Psi}_{-i}(m-1)$. We can write 

\vspace{-0.3cm}
\begin{small}
\begin{align}\nonumber
&|\hat{V}_{i_k}^{(m)}-V_i(\bold{\Psi}_{-i}(m-1),\Psi_{i_k})|\cr 
&\qquad=|V^{T(\epsilon)}_i(\bold{\Psi}_{-i}(Z_{(i-1)r+k}),\Psi_{i_k})-V_i(\bold{\Psi}_{-i}(m-1),\Psi_{i_k})|\cr 
&\qquad=|V^{T(\epsilon)}_i(\bold{\Psi}_{-i}(m-1),\Psi_{i_k})-V_i(\bold{\Psi}_{-i}(m-1),\Psi_{i_k})|< \epsilon,
\end{align}\end{small}where the last inequality is due to Lemma \ref{lemm:epsilon-value-error}. Similarly, in the last sub-interval prosumers are following $\bold{\Psi}(m-1)$. Therefore, $|\hat{V}_i^{(m)}-V_i(\bold{\Psi}(m-1))|<\epsilon$. Now we have

\vspace{-0.3cm}
\begin{small}
\begin{align}\label{eq:epsilon-equilibrium}
V_i(\bold{\Psi}(m-1))&>\hat{V}_i^{(m)}-\epsilon>\max\limits_{k=1,\ldots,r}\hat{V}_{i_k}^{(m)}-2\epsilon\cr 
&>\max\limits_{k=1,\ldots,r}V_i(\bold{\Psi}_{-i}(m-1),\Psi_{i_k})-3\epsilon\cr 
&=\max\limits_{\Psi_i}V_i(\bold{\Psi}_{-i}(m-1),\Psi_i)-3\epsilon
\end{align}\end{small}where the last equality holds because the maximum payoff of the $i$th prosumer is obtained by following a policy corresponding to one of the extreme points of player $i$'s actions in the virtual game. Since \eqref{eq:epsilon-equilibrium} holds for all $i$, this implies that $\bold{\Psi}(m-1)$ must be a $3\epsilon$-NE (or simply $\epsilon$-NE by rescaling $\epsilon$ to $\frac{\epsilon}{3}$ in all the above analysis). 

On the other hand, if $\boldsymbol{\rho}(m-1)$ is an $\epsilon$-NE for the virtual game, then $\bold{\Psi}(m-1)$ is an $\epsilon$-Nash policy for the original game, in which case using similar argument as \eqref{eq:epsilon-equilibrium} one can show that $\hat{V}_i^{(m)}>\max\limits_{k=1,\ldots,r}\hat{V}_{i_k}^{(m)}-\epsilon, \forall i$, and hence $q^{(m)}_i=1, \forall i$. In other words, once the random selection of occupation measures at the beginning of some long interval $[(m-1)(Nr+1)T(\epsilon)+1, m(Nr+1)T(\epsilon))$ forms an $\epsilon$-NE, then the algorithm will stuck there and prosumers will continue playing that $\epsilon$-Nash policy forever.  

Finally, we need to show that almost surely, there will be a time such that $q_i^{m}=1, \forall i$. Using Theorem \ref{thm:existence} we know that the virtual game has at least one pure-strategy NE denoted by $\boldsymbol{\rho}^*$. Since $J_i(\boldsymbol{\rho})$ is a continuous function of $\boldsymbol{\rho}\in \mathcal{M}_1\times\ldots\times\mathcal{M}_N$, for any $\epsilon>0$, there is a $\delta>0$ such that if $|\rho_i-\rho^*_i|\leq \delta, \forall i$, then $|J_i(\boldsymbol{\rho})-J_i(\boldsymbol{\rho}^*)|<\epsilon, \forall i$. In other words, all the occupation measures profiles $\boldsymbol{\rho}$ satisfying $|\rho_i-\rho^*_i|\leq \delta, \forall i$, will form an $\epsilon$-NE for the virtual game (and hence, their corresponding policies are $\epsilon$-Nash policy for the original game). Since, at the beginning of each longer period $m=1,2,\ldots$, each player $i$ chooses an occupation measure $\rho_i(m-1)$ from $\mathcal{M}_i$, uniformly at random and independently from others, thus the probability that the randomly selected profile $\boldsymbol{\rho}(m-1)$ is an $\epsilon$-NE is at least $\delta^N$. Let $M$ be a random variable denoting the first integer $m$ such that $\boldsymbol{\rho}(m-1)$ is an $\epsilon$-NE. We have $\mathbb{P}\{M\ge m\}<(1-\delta^{N})^{m}$. Since $\sum_{m=1}^{\infty}\mathbb{P}\{M\ge m\}<\infty$, Borel-Cantelli lemma implies that almost surely $M<\infty$, i.e., the algorithm finds an $\epsilon$-NE policy and uses it forever.

\subsection{Proof of Theorem \ref{thm:online-algorithm-utility}}\label{apx:online-algorithm-utility}

For any arbitrary but fixed demand profile $\boldsymbol{D}(t)$, one can easily see that the cost of the utility company given by \eqref{eq:company-cost} is a convex function of energy allocation profile $\boldsymbol{e}(t)$. This is simply because for fixed $\boldsymbol{D}(t)$,  $C(\boldsymbol{D}(t), \boldsymbol{e}(t))$ is a quadratic function in terms of $\boldsymbol{e}(t)$. Moreover, the action set of utility company $\mathcal{E}$ given by \eqref{eq:action-set-utility-company} is a closed convex set, and we have $\max\limits_{x,y\in \mathcal{E}}\|x-y\|=E_{\max}<\infty$. In addition, given fixed $\boldsymbol{D}(t)$, the function $C(\boldsymbol{D}(t),\boldsymbol{e}(t))$ is differentiable with respect to $\boldsymbol{e}(t)$, and we have $\|\nabla_{\boldsymbol{e}(t)} C(\boldsymbol{D}(t),\boldsymbol{e}(t))\|^2\!=\!\sum_{\ell=1}^{K}\Big(\beta\!-2\gamma\!\!\sum_{j\in \mathcal{B}_{\ell}}\!\!D_j(t)\!+2\gamma e_{\ell}(t)\Big)^2\leq K(\beta+2\gamma K D_{\max}+2\gamma E_{\max})^2$, which shows that $\max_{\boldsymbol{e}(t)\in \mathcal{E}}\|\nabla_{\boldsymbol{e}(t)} C(\boldsymbol{D}(t),\boldsymbol{e}(t))\|$ is uniformly bounded above for all $t$. 

Next, we state the following lemma from online convex optimization whose proof can be found in \cite[Theorem 1]{zinkevich2003online}: 
\begin{lemma}\label{lemm:online-convex}
\normalfont Let $\mathcal{C}$ be a closed convex set. For $t=1,2,\ldots$, consider a sequence of decision points $x_t\!\in\!\mathcal{C}$, and a sequence of convex functions $h_t(\cdot)\!:\!\mathcal{C}\to\mathbb{R}$, such that $x_{t}=\Pi_{\mathcal{C}}[x_{t-1}-\gamma_{t-1}\nabla h_{t-1}(x_{t-1})]$. If $F:=\max\limits_{x,y\in \mathcal{C}}\|x-y\|<\infty$, and $H:=\max\limits_{x\in \mathcal{C}, t=1,2,\ldots}\|\nabla h_t(x)\|<\infty$, then by choosing $\gamma_t:=\frac{1}{\sqrt{t}}$, we have $\sum_{t=1}^{T}h_t(x_t)-\min\limits_{x\in\mathcal{C}}\sum_{t=1}^{T}h_t(x)\leq (\frac{1}{2}F^2+H^2)\sqrt{T}$.
\end{lemma}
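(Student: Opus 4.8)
The statement is the classical projected online-gradient-descent regret bound, so the plan is to reproduce that analysis rather than appeal to the cited reference. Throughout I write $g_t:=\nabla h_t(x_t)$ and fix a comparator $x^*\in\argmin_{x\in\mathcal{C}}\sum_{t=1}^{T}h_t(x)$, the best static decision in hindsight; this minimizer exists because $F<\infty$ forces $\mathcal{C}$ to be bounded, hence (being closed) compact, and each convex $h_t$ is continuous. The engine of the proof is a one-step potential inequality for the squared distance $\Phi_t:=\|x_t-x^*\|^2$, which I would first derive for a single round and then sum over $t=1,\ldots,T$.

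For the per-round analysis I would combine two facts. First, Euclidean projection onto the convex set $\mathcal{C}$ is non-expansive and $x^*\in\mathcal{C}$, so applying the update $x_{t+1}=\Pi_{\mathcal{C}}[x_t-\gamma_t g_t]$ gives
\begin{align}\nonumber
\Phi_{t+1}=\|\Pi_{\mathcal{C}}[x_t-\gamma_t g_t]-x^*\|^2 &\le \|x_t-\gamma_t g_t-x^*\|^2\cr
&=\Phi_t-2\gamma_t\langle g_t,x_t-x^*\rangle+\gamma_t^2\|g_t\|^2.
\end{align}
Second, convexity of $h_t$ yields $h_t(x_t)-h_t(x^*)\le\langle g_t,x_t-x^*\rangle$. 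Substituting the inner product from the first relation into the second and rearranging produces the key per-round bound
\begin{align}\nonumber
h_t(x_t)-h_t(x^*)\le\frac{\Phi_t-\Phi_{t+1}}{2\gamma_t}+\frac{\gamma_t}{2}\|g_t\|^2.
\end{align}

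Summing over $t$, I would control the two resulting sums separately. The gradient term uses $\|g_t\|\le H$ and $\gamma_t=1/\sqrt{t}$, together with $\sum_{t=1}^{T}t^{-1/2}\le 2\sqrt{T}$, to give $\tfrac12\sum_{t=1}^{T}\gamma_t\|g_t\|^2\le H^2\sqrt{T}$. The telescoping term is handled by Abel summation: since $1/\gamma_t=\sqrt{t}$ is increasing, $0\le\Phi_t\le F^2$, and $\Phi_{T+1}\ge0$,
\begin{align}\nonumber
\sum_{t=1}^{T}\frac{\Phi_t-\Phi_{t+1}}{2\gamma_t}\le\frac{F^2}{2}\left(\frac{1}{\gamma_1}+\sum_{t=2}^{T}\Big(\frac{1}{\gamma_t}-\frac{1}{\gamma_{t-1}}\Big)\right)=\frac{F^2}{2\gamma_T}=\frac{F^2}{2}\sqrt{T}.
\end{align}
Adding the two contributions gives $\sum_{t=1}^{T}\big(h_t(x_t)-h_t(x^*)\big)\le(\tfrac12 F^2+H^2)\sqrt{T}$, which is exactly the claimed bound; dividing by $T$ shows the average regret is $O(1/\sqrt{T})\to 0$.

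The main obstacle—indeed essentially the only nonroutine step—is the Abel-summation manipulation of the telescoping term under a time-varying step size. One must exploit the monotonicity of $1/\gamma_t$ together with the uniform diameter bound $\Phi_t\le F^2$ so that the nonnegative increments $\tfrac{1}{\gamma_t}-\tfrac{1}{\gamma_{t-1}}$ collapse cleanly to $1/\gamma_T$ instead of accumulating a spurious logarithmic or linear factor. This is precisely where the choice $\gamma_t=1/\sqrt{t}$ is tuned so that the two pieces $\tfrac12 F^2\sqrt{T}$ and $H^2\sqrt{T}$ balance and yield the advertised $O(\sqrt{T})$ rate.
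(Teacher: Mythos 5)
Your proof is correct, and it is precisely the classical projected online-gradient-descent analysis (per-round potential inequality via non-expansiveness of projection and convexity, then Abel summation of the telescoping term against the increasing $1/\gamma_t$) that underlies the result. The paper itself does not prove this lemma but simply defers to Zinkevich's Theorem 1, so your writeup supplies exactly the standard argument behind that citation, with the constants worked out to match the stated bound $(\tfrac12 F^2+H^2)\sqrt{T}$.
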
 

To finish the proof, one can imagine that at each time step $t$, the utility company takes an action in the closed convex set $\mathcal{E}$, and the prosumers collaboratively with nature choose the demand $\boldsymbol{D}(t)$, which in turn determines a convex function $h_t(x):=C(\boldsymbol{D}(t),x)$. Now if the utility company selects its allocation energy $\boldsymbol{e}(t)$ to different substations based on \eqref{eq:online-alg-2}, then appealing to Lemma \ref{lemm:online-convex} one can see that the average regret $\frac{R_T}{T}$ of the utility company is bounded by $O(\frac{1}{\sqrt{T}})$.

\subsection{Continuity of Stationary Distribution}\label{apx:continuity-stationary}
\begin{lemma}\label{lemm:continuity-stationary}
\normalfont Let $\pi^{\Psi_i}$ denote the unique stationary distribution corresponding to the Markov chain with transition probabilities $Q_{s',s}^{\Psi_i}:=\sum_{a\in \mathcal{A}_i}\Psi_i(a|s)W^{(i)}_{s'as}$ where $W^{(i)}_{s'as}$ are constant numbers and $\Psi_i$ is a stationary policy. Then $\pi^{\Psi_i}$ is a continuous function of $\Psi_i$.
\end{lemma}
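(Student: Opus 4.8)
The plan is to exhibit the stationary distribution $\pi^{\Psi_i}$ as a \emph{uniform} limit of maps that are manifestly continuous in $\Psi_i$, and then to invoke the fact that a uniform limit of continuous functions is continuous. The entire argument rests on the quantitative convergence estimate already established in the proof of Lemma \ref{lemm:stationary}, the crucial feature being that its convergence rate does not depend on the policy.

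First I would record two elementary continuity observations. The transition matrix $Q^{\Psi_i}$ with entries $Q_{s',s}^{\Psi_i}=\sum_{a\in\mathcal{A}_i}\Psi_i(a|s)W^{(i)}_{s'as}$ depends \emph{affinely} on $\Psi_i$, since the coefficients $W^{(i)}_{s'as}$ are fixed constants; hence $\Psi_i\mapsto Q^{\Psi_i}$ is continuous. Consequently, for each fixed $t$, the matrix power $\Psi_i\mapsto (Q^{\Psi_i})^t$ is a polynomial in the entries of $Q^{\Psi_i}$ and is therefore continuous on the set of stationary policies of prosumer $i$.

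Next I would supply the uniform convergence. By the matrix estimate derived in the proof of Lemma \ref{lemm:stationary}, for every stationary policy $\Psi_i$ we have
\begin{align}\nonumber
\big\|(Q^{\Psi_i})^t-\boldsymbol{1}(\pi^{\Psi_i})'\big\|_2\leq (1-\lambda_i)^t.
\end{align}
The key point is that the constant $\lambda_i=\min_{|k|\leq S_i^{\max}}\mathbb{P}\{G_i=k\}$ governing the rate is a parameter of the problem that does \emph{not} depend on $\Psi_i$ (indeed the lower bound $Q_{s',s}^{\Psi_i}\geq\lambda_i$ holds for every policy, because $\sum_a\Psi_i(a|s)=1$). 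Hence $(Q^{\Psi_i})^t$ converges to the rank-one matrix $\boldsymbol{1}(\pi^{\Psi_i})'$ \emph{uniformly} over all stationary policies. A uniform limit of continuous functions is continuous, so $\Psi_i\mapsto\boldsymbol{1}(\pi^{\Psi_i})'$ is continuous; reading off any single row of this limiting matrix recovers $(\pi^{\Psi_i})'$, which is therefore a continuous function of $\Psi_i$.

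The only step requiring care is precisely this uniformity of the convergence rate, and that is exactly what Lemma \ref{lemm:stationary} delivers: were the bound merely pointwise in $\Psi_i$, one could not conclude continuity of the limit. As an independent cross-check, one may instead argue by linear algebra: $\pi^{\Psi_i}$ is the unique solution of $\pi(I-Q^{\Psi_i})=0$ together with $\sum_{s}\pi(s)=1$; replacing one redundant balance equation by the normalization yields a square system whose coefficient matrix has nonvanishing determinant (uniqueness being guaranteed by Lemma \ref{lemm:stationary}). Cramer's rule then expresses each component of $\pi^{\Psi_i}$ as a ratio of determinants, each a polynomial in the entries of $Q^{\Psi_i}$ and hence continuous in $\Psi_i$, with nonvanishing denominator; continuity follows.
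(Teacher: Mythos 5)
Your proof is correct, but it takes a genuinely different route from the paper's. The paper argues by compactness: take policies $\Psi_i(k)\to\Psi_i$, note that the stationary distributions $\pi^{\Psi_i(k)}$ live in a compact simplex and hence have an accumulation point $\pi^*$, pass to the limit in the fixed-point equation to get $\pi^*=Q^{\Psi_i}\pi^*$, and invoke uniqueness of the stationary distribution to conclude $\pi^*=\pi^{\Psi_i}$, so that the whole sequence converges. Your main argument instead exploits the quantitative estimate $\bigl\|(Q^{\Psi_i})^t-\boldsymbol{1}(\pi^{\Psi_i})'\bigr\|_2\leq(1-\lambda_i)^t$ from the proof of Lemma \ref{lemm:stationary}, together with the correct and essential observation that the rate $\lambda_i$ is policy-independent, to realize $\Psi_i\mapsto\boldsymbol{1}(\pi^{\Psi_i})'$ as a \emph{uniform} limit of the continuous maps $\Psi_i\mapsto(Q^{\Psi_i})^t$. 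What each buys: your argument is quantitative (it yields an explicit modulus of continuity, and with little extra work Lipschitz-type bounds), but it leans on the Doeblin-type uniform lower bound $Q^{\Psi_i}_{s',s}\geq\lambda_i>0$ from Assumption \ref{ass:stationary}; the paper's soft compactness argument needs no rate whatsoever and therefore survives the relaxation the paper itself contemplates (correlated generation, where one only assumes each stationary policy induces a unique stationary distribution), a regime in which your uniform-convergence argument would break down. Your Cramer's-rule cross-check is also sound and shares this robustness --- uniqueness of the stationary distribution forces $I-Q^{\Psi_i}$ to have rank $|\mathcal{S}_i|-1$, and since the normalization row $\boldsymbol{1}'$ is not orthogonal to $\pi^{\Psi_i}$ it lies outside the row space, so the patched square system is nonsingular at every policy --- and it in fact gives more than continuity, namely that $\pi^{\Psi_i}$ is a rational (hence real-analytic) function of $\Psi_i$; it is arguably the most elementary self-contained route of the three.
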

\begin{proof}
Since the transition weights $Q_{s',s}^{\Psi_i}$ are linear functions of $\Psi_i$, the transition matrix $Q^{\Psi_i}=(Q_{s',s}^{\Psi_i})_{s',s}$ is a continuous function of $\Psi_i$. Now assume $\{\Psi_{i}(k)\}_{k=1}^{\infty}$ be a sequence of stationary policies converging to $\Psi_i$, and denote the unique stationary policies of $Q^{\Psi_i(k)}$ by $\pi^{\Psi_i(k)}$. Since $\{\pi^{\Psi_i(k)}\}_{k=1}^{\infty}$ is a sequence in the product of compact probability simplexes, it has an accumulation point $\pi^*$ with a sub-sequence converging to it. With some abuse of notation let us denote this sub-sequence again by $\{\pi^{\Psi_i(k)}\}_{k=1}^{\infty}$. We have
\begin{align}\nonumber
\pi^{*}=\lim_{k\to \infty}\pi^{\Psi_i(k)}=\lim_{k\to \infty}Q^{\Psi_{i}(k)}\pi^{\Psi_{i}(k)}=Q^{\Psi_{i}}\pi^{*}.
\end{align} 
This means that $\pi^{*}$ is also a stationary distribution for $Q^{\Psi_i}$. Since we have assumed that $Q^{\Psi_i}$ has a unique stationary distribution $\pi^{\Psi_i}$, this implies that $\pi^{\Psi_i}=\pi^{*}$. Thus $\pi^{\Psi_i}$ is the only accumulation point of $\pi^{\Psi_i(k)}$, which shows that $\pi^{\Psi_i}$ is a continuous function of $\Psi_i$.  
\end{proof}

\vspace{-1cm}
\begin{biographynophoto}
{\bf S. Rasoul Etesami} (S'12, M'16) received his Ph.D. degree in Electrical and Computer Engineering in 2015 from University of Illinois at Urbana-Champaign. He was a postdoctoral research fellow at Princeton University until May 2017. His research interests include social and distributed networks, networked games, smart grids, and algorithm design. \vspace{-1.1cm}
\end{biographynophoto}

 
\begin{biographynophoto}
{\bf Walid Saad} (S'07, M'10, SM’15) received his Ph.D degree from the University of Oslo in 2010. Currently,  he is an Associate Professor at the Department of Electrical and Computer Engineering at Virginia Tech, where he leads the Network Science, Wireless, and Security (NetSciWiS) laboratory, within the Wireless@VT research group. His  research interests include wireless networks, game theory, cybersecurity, unmanned aerial vehicles, and cyber-physical systems. Dr. Saad is the recipient of the NSF CAREER award in 2013, the AFOSR summer faculty fellowship in 2014, and the Young Investigator Award from the Office of Naval Research (ONR) in 2015. He was the author/co-author of six conference best paper awards at WiOpt in 2009, ICIMP in 2010, IEEE WCNC in 2012,  IEEE PIMRC in 2015, IEEE SmartGridComm in 2015, and EuCNC in 2017. He is the recipient of the 2015 Fred W. Ellersick Prize from the IEEE Communications Society. In 2017, Dr. Saad was named College of Engineering Faculty Fellow at Virginia Tech. He currently serves as an editor for the IEEE Transactions on Wireless Communications, IEEE Transactions on Communications, and Transactions on Information Forensics and Security.\vspace{-1.1cm}
\end{biographynophoto}

\begin{biographynophoto}
{\bf Narayan B. Mandayam} (S'89, M'94, SM'99, F'09)
received the M.S. and Ph.D. degrees from Rice University in 1991
and 1994, respectively, all in electrical engineering.
From 1994 to 1996, he was a Research Associate
at the Wireless Information Network Laboratory
(WINLAB), Rutgers University, New Brunswick, NJ,
USA, before joining the faculty of the Electrical and
Computer Engineering department at Rutgers where
he is currently a Distinguished Professor. He also
serves as Associate Director at WINLAB. He was a visiting faculty fellow
in the Department of Electrical Engineering, Princeton University, Princeton,
NJ, USA, in 2002 and a visiting faculty at the Indian Institute of Science,
Bengaluru, India, in 2003. His research interests are in various aspects of
wireless data transmission, modeling social knowledge creation on the internet, game
theory, communications and networking, signal processing, among many others. Dr. Mandayam is a co-recipient of the
2015 IEEE Communications Society Advances in Communications Award
for his seminal work on power control and pricing, the 2014 IEEE Donald
G. Fink Award for his IEEE Proceedings paper titled “‘Frontiers of Wireless
and Mobile Communications’” and the 2009 Fred W. Ellersick Prize from
the IEEE Communications Society for his work on dynamic spectrum access
models and spectrum policy. He is also a recipient of the Peter D. Cherasia
Faculty Scholar Award from Rutgers University (2010), the National Science
Foundation CAREER Award (1998) and the Institute Silver Medal from
the Indian Institute of Technology (1989). He is a coauthor of the books:
Principles of Cognitive Radio (Cambridge University Press, 2012) and Wireless
Networks: Multiuser Detection in Cross-Layer Design (Springer, 2004). He has
served as an Editor for the journals IEEE Communication Letters and IEEE
Transactions on Wireless Communications. He has also served as a guest editor
of the IEEE JSAC Special Issues on Adaptive, Spectrum Agile and Cognitive
Radio Networks (2007) and Game Theory in Communication Systems
(2008).
\end{biographynophoto}
\vspace{-1.1cm}

\begin{biographynophoto}
{\bf H. Vincent Poor} (S'72, M'77, SM'82, F'87) received the Ph.D. degree in electrical engineering and computer science from Princeton University in 1977.  From 1977 until 1990, he was on the faculty of the University of Illinois at Urbana-Champaign. Since 1990 he has been on the faculty at Princeton, the Michael Henry Strater University Professor of Electrical Engineering. During 2006-16 he served as Dean of Princeton's School of Engineering and Applied Science. He has also held visiting appointments at several other institutions, most recently at Berkeley and Cambridge. His research interests are in the areas of information theory, stochastic analysis and statistical signal processing, and their applications in wireless networks and related fields such as smart grid. Among his publications in these areas is the recent book \textit{Mechanisms and Games for Dynamic Spectrum Allocation} (Cambridge University Press, 2014).

Dr. Poor is a member of the National Academy of Engineering and the National Academy of Sciences, and is a foreign member of the Royal Society. He is also a fellow of the American Academy of Arts and Sciences and of other national and international academies. He received a Guggenheim Fellowship in 2002 and the IEEE Education Medal in 2005. Recent recognition of his work includes the the 2016 John Fritz Medal, the 2017 IEEE Alexander Graham Bell Medal, Honorary Professorships at Peking University and Tsinghua University, both conferred in 2016, and a D.Sc. \textit{honoris causa} from Syracuse University awarded in 2017.

\end{biographynophoto}

\end{document}